\documentclass[aps,pra,10pt,twocolumn,superscriptaddress,floatfix,nofootinbib,showpacs,longbibliography,tikz,border=6mm]{revtex4-2}

\usepackage{amsmath}
\usepackage{tikz}

\usepackage{nicematrix}
\usepackage[utf8]{inputenc}
\usepackage[T1]{fontenc}     
\usepackage[british]{babel}  
\usepackage[sc,osf]{mathpazo}
\usepackage{times}
\usepackage[table]{xcolor}
\usepackage[scaled=0.86]{berasans}  
\usepackage[colorlinks=true, citecolor=purple, urlcolor=blue]{hyperref}
\usepackage{comment}
\makeatletter
\newcommand{\setword}[2]{%
  \phantomsection
  #1\def\@currentlabel{\unexpanded{#1}}\label{#2}%
}
\makeatother
\usepackage{comment}
\usepackage{graphicx} 
\usepackage[babel]{microtype}  
\usepackage{amsmath,amssymb,amsthm,bm,amsfonts,mathrsfs,bbm} 

\usepackage{xspace}  
\usepackage{pgf,tikz}
\usepackage{xcolor}
\usepackage{multirow}
\usepackage{array}
\usepackage{bigstrut}
\usepackage{braket}
\usepackage{color}
\usepackage{natbib}
\usepackage{multirow}
\usepackage{mathtools}
\usepackage{float}
\usepackage[caption = false]{subfig}
\usepackage{xcolor,colortbl}
\usepackage{color}

\newcommand{\be}{\begin{equation}}
\newcommand{\ee}{\end{equation}}
\newcommand{\ba}{\begin{eqnarray}}
\newcommand{\ea}{\end{eqnarray}}

\newcommand{\tr}{\operatorname{Tr}}

\newtheorem{theorem}{Theorem}

\newtheorem{definition}{Definition}
\newtheorem{proposition}{Proposition}

\def\>{\rangle}
\def\<{\langle}

\usepackage{centernot}
\usepackage{subfig}
\usepackage{filecontents}

\providecommand{\ket}[1]{| #1{\rangle}}
\providecommand{\bra}[1]{\langle #1|}

\usepackage[table]{xcolor}

\tikzset{
  box/.style = {rectangle, draw, minimum width=2.4cm, minimum height=6mm, align=center, font=\small},
  meas/.style = {rectangle, draw, minimum width=3.6cm, minimum height=7mm, align=center, font=\small, fill=blue!10},
  even/.style = {rectangle, draw, minimum width=2.4cm, minimum height=6mm, align=center, font=\small,fill=yellow!30},
  odd/.style = {rectangle, draw, minimum width=2.4cm, minimum height=6mm, align=center, font=\small,fill=red!30},
  arr/.style  = {-Stealth, thick}
}

\begin{document}
\title{On the Origin of Beyond-Classical Advantage in the Parity-Permutation Problem}

\author{Jayashree Karmakar}
\affiliation{S. N. Bose National Centre for Basic Sciences, Block JD, Sector III, Salt Lake, Kolkata 700 106, India}

\author{Biswadeep Chatterjee}
\affiliation{S. N. Bose National Centre for Basic Sciences, Block JD, Sector III, Salt Lake, Kolkata 700 106, India}

\author{Rafiuddin Gazi}
\affiliation{S. N. Bose National Centre for Basic Sciences, Block JD, Sector III, Salt Lake, Kolkata 700 106, India}

\author{Ananya Chakraborty}
\affiliation{S. N. Bose National Centre for Basic Sciences, Block JD, Sector III, Salt Lake, Kolkata 700 106, India}

\author{Snehasish Roy Chowdhury}
\affiliation{Physics and Applied Mathematics Unit, 203 B.T. Road Indian Statistical Institute Kolkata, 700 108, India}

\author{Manik Banik}
\affiliation{S. N. Bose National Centre for Basic Sciences, Block JD, Sector III, Salt Lake, Kolkata 700 106, India}

\author{Tamal Guha} 
\affiliation{Mathematical Institute, Slovak Academy of Sciences, Bratislava, Slovakia.}

\author{Sahil Gopalkrishna Naik}
\affiliation{S. N. Bose National Centre for Basic Sciences, Block JD, Sector III, Salt Lake, Kolkata 700 106, India}

\author{Kunika Agarwal}
\affiliation{S. N. Bose National Centre for Basic Sciences, Block JD, Sector III, Salt Lake, Kolkata 700 106, India}

\begin{abstract}
We investigate the task of identifying the parity (odd vs even) of an unknown permutation applied to $n$ particles. Classically, using fewer than $n$ distinct labels per particle limits the success probability to random guessing, whereas quantum mechanics, exploiting entanglement in both preparation and measurement, accomplishes the task perfectly with as few as $\big\lceil \sqrt{n}\big\rceil$ levels per particle [\href{https://doi.org/10.1103/yhyv-xnwq}{PRL {\bf 135}, 260603 (2025)}]. We show that even without entangled preparation, quantum theory still offers a probabilistic advantage over classical strategies. Moreover, such product preparations yield perfect success in locally quantum theories, where elementary systems are quantum but their composition follows the minimal tensor product structure of generalized probabilistic theories (GPTs). We further identify GPT models that accomplish the task with certainty without requiring entanglement either at the preparation stage or at the measurement stage.  Our central result establishes that the linear dimension of the elementary systems, rather than entanglement, is the fundamental resource governing the existence of probabilistic advantage in the permutation parity problem. In particular, below the required dimension threshold, no amount of entanglement can improve upon the random-guessing limit.

\end{abstract}


\maketitle

\section{Introduction} 
A central goal in quantum information theory
is to identify tasks where quantum systems provide advantages over the classical resources \cite{Dowling2003,Deutsch2020,Aspect2023}. Notable examples, that exhibit quantum advantage include Shor's algorithm \cite{Shor1997}, communication complexity reductions \cite{Buhrman2001,Buhrman2010,Zhong2021}, improved learning efficiency  \cite{Huang2022,Hinsche2023}, and query savings in oracle problems \cite{Simon1997,Bernstein1997}. Identifying the exact nonclassical feature(s) underlying a particular quantum advantage is essential both for a deeper theoretical understanding of its origin and for guiding experimental efforts to harness it in practice. For instance, in the circuit model of computation, the Gottesman–Knill theorem shows that stabilizer circuits—comprising of Clifford gates with computational basis state preparation and computational basis measurement—can be efficiently simulated on a classical computer \cite{Gottesman1998}. This identifies the non-Clifford $\mathrm{T}$-gate (or equivalently, the $\mathrm{T}$-state) as the essential resource for quantum advantage \cite{Bravyi2005,Knill2005,Campbell2012,Howard2017}. An analogous theorem in communication complexity settings identifies the non-Clifford resources necessary for quantum advantage in one-way protocols \cite{Chowdhury2026}.

In a recent work \cite{Diebra2025}, Diebra et al. identify an elegant problem demonstrating a striking quantum advantage. The task, called the $n$-permutation parity problem  which we denote as $\mathbb{P}[n]$, involves $n$ particles subject to a hidden permutation, with the binary question: is the permutation even or odd? Notably, the problem requires neither any oracle access nor any complex communication setup, and unlike the references \cite{Korff2004,Hayashi2005} the quantum advantage is obtained in non-asymptotic scenario. Classically, if each particle carries a distinct label (e.g., $n$ different colors), the permutation and its parity can be determined exactly. However, with even a single label missing, perfect parity identification becomes impossible: every permutation has an opposite-parity counterpart yielding the same labeling, restricting classical strategies to random guessing with success probability $\mathcal{P} = 1/2$. Quantum systems, by contrast, can achieve perfect success with $n$ particles each having local dimension as few as $\big\lceil\sqrt{n}\big\rceil$. It is important to note that entanglement is pivotal to the quantum advantage. Specifically, it is essential at two stages: first, in preparing the quantum systems in a suitable entangled state, and second, in performing a joint entangled-basis measurement that together enable perfect success.

In this work, we examine whether entanglement is strictly necessary—both at the stage of state preparation and at the stage of measurement—for achieving an advantage in the permutation parity problem. Quantum theory provides a quadratic advantage in solving the $\mathbb{P}[n]$ task, achieving perfect success with lower-dimensional systems than classically possible, where dimension is defined operationally as the number of perfectly distinguishable states. Note that, any quantum strategy achieving success above random guessing (i.e. $1/2$) in $\mathbb{P}[n]$ task with quantum systems of operational dimension strictly less than $n$ demonstrates an advantage.\smallskip
We begin by analyzing the minimal nontrivial instance, the $\mathbb{P}[3]$ game. While three classical two-level systems offer no advantage beyond random guessing, we show that three qubits prepared in a product state can achieve an optimal success probability of $7/8$. 
In the strategy, although the preparations involve only product states, the measurement necessarily requires entanglement. Allowing more general correlations at the level of states leads to a strict improvement: biseparable states achieve a higher success probability, revealing a hierarchy of resources within quantum theory.
These observations suggest that entanglement is not the essential resource underlying the advantage. Indeed, we show that an advantage in the permutation parity task is possible if and only if the linear dimension of the underlying state space is at least $n$, thereby identifying linear dimension—rather than entanglement—as the fundamental resource governing performance in this task.
To probe whether entanglement is fundamentally required, we move beyond the standard tensor product structure. Within locally quantum theory, where individual systems are described quantum mechanically but their composition is not quantum in general—a line of inquiry with deep roots in the study of quantum foundations \cite{Klay1987,Coecke2000,Barnum2010,Acin2010,delaTorre2012,Barnum2014,Naik2022,Lobo2022,Sen2022,Patra2023,Arai2024}-we show that the $\mathbb{P}[3]$ game can be accomplished perfectly using only product state preparations. Building on this, we further embed the problem in a more extended setting of generalized probabilistic theories (GPTs) \cite{Plvala2023}, which provides a unifying mathematical structure for exploring physical theories beyond quantum mechanics. We construct explicit instances of GPTs where the permutation parity problem can be solved perfectly without employing entanglement either in preparation or in measurement. 
These observations suggest that entanglement is not the essential resource underlying the advantage. Indeed, we establish a general necessary and sufficient condition for achieving success above random guessing in the $\mathbb{P}[n]$ task: such an advantage is possible if and only if the linear dimension of the underlying state space is at least $n$. This identifies linear dimension, rather than entanglement, as the fundamental resource governing performance in the permutation parity problem.
\vspace{-0.1cm}
\section{Playing the $\mathbb{P}[3]$ game in Quantum Theory} 
Classically, the permutation parity problem $\mathbb{P}[n]$ cannot be solved with a success probability exceeding that of random guessing unless $n$ distinct labels are used. Quantum mechanics, however, achieves perfect success with only $\lceil \sqrt{n} \rceil$ distinguishable states per particle by leveraging entanglement. For instance, $\mathbb{P}[3]$ is solved deterministically using the three-qubit W-state $\ket{W}=(\ket{011}+\Gamma\ket{101}+\Gamma^{2}\ket{110})/\sqrt{3}$, where $\Gamma=\exp(2\pi i/3)$, as its even- and odd-parity orbits are orthogonal. However, a quantum advantage need not be deterministic, any strategy that surpasses the classical success probability using quantum systems of an operational dimension less than $n$ demonstrates a probabilistic advantage. We start with establishing such an advantage for $\mathbb{P}[3]$ without entangled preparation, that is, a fully product preparation, also providing an upper bound on the success probability achievable with product states and then showing such an advantage for bi-separable preparation.\\

\subsection{Fully Product Preparation}
We start with an explicit construct of a fully product 3-qubit state. The following proposition establishes the optimal success probability achievable in the $\mathbb{P}[3]$ task under this restriction.
\begin{proposition}\label{prop1}
The maximum success probability for solving the $\mathbb{P}[3]$ task with three-qubit product states is $\frac{7}{8}$.
\end{proposition}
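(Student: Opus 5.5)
The plan is to reduce to a state-discrimination problem and then exploit the $S_3$ symmetry. The referee applies a uniformly random $\sigma\in S_3$ to the prepared state $\rho=\rho_1\otimes\rho_2\otimes\rho_3$. Guessing the parity is therefore discriminating, with equal priors, between
\[
\rho_{e}=\tfrac13\sum_{\sigma\ \mathrm{even}}P_\sigma\rho P_\sigma^\dagger
\quad\text{and}\quad
\rho_{o}=\tfrac13\sum_{\sigma\ \mathrm{odd}}P_\sigma\rho P_\sigma^\dagger .
\]
By Helstrom the optimum is
\[
\mathcal P=\tfrac12+\tfrac14\bigl\|\rho_e-\rho_o\bigr\|_1,
\qquad
\rho_e-\rho_o=\tfrac13\sum_{\sigma}\mathrm{sgn}(\sigma)P_\sigma\rho P_\sigma^\dagger .
\]
So the claim is equivalent to $\max\|\rho_e-\rho_o\|_1=3/2$ over product states. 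The optimal measurement is the Helstrom projector, which in general is entangled; this is consistent with the remark in the introduction.

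First I would reduce to pure product states. A product of mixed states is a convex combination of pure product states, the map $\rho\mapsto\rho_e-\rho_o$ is linear, and the trace norm is convex. Hence the maximum is attained at $\rho=|\psi\rangle\langle\psi|$ with $|\psi\rangle=|a\rangle|b\rangle|c\rangle$.

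Write $|v_\sigma\rangle=P_\sigma|\psi\rangle$. Then $\rho_e-\rho_o=\frac13\sum_\sigma s_\sigma|v_\sigma\rangle\langle v_\sigma|$ with $s_\sigma=\mathrm{sgn}(\sigma)$. Its nonzero spectrum coincides with that of $\frac13 G^{1/2}SG^{1/2}$, where $G_{\sigma\tau}=\langle v_\sigma|v_\tau\rangle$ is the $6\times6$ Gram matrix and $S=\mathrm{diag}(s_\sigma)$. The entries of $G$ are products of the pairwise overlaps $\alpha=\langle a|b\rangle$, $\beta=\langle b|c\rangle$, $\gamma=\langle c|a\rangle$. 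Moreover $G_{\sigma\tau}=f(\sigma^{-1}\tau)$ for a function $f$ on $S_3$: $f(e)=1$, the transpositions give $|\alpha|^2,|\beta|^2,|\gamma|^2$, and the two 3-cycles give $\alpha\beta\gamma$ and its conjugate.

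So $G$ is a convolution operator on $\mathbb C[S_3]$, and multiplying by $S$ corresponds to twisting with the sign character. I would block-diagonalize with the group Fourier transform over the irreps of $S_3$: the trivial, sign and two-dimensional standard representations. $G$ becomes $\hat f(\mathrm{triv})\oplus\hat f(\mathrm{sgn})\oplus\hat f(\mathrm{std})\otimes\mathbb 1_2$. Multiplication by $S$ exchanges the trivial and sign blocks and maps the standard block to its sign-twist, which is equivalent to the standard representation again. This should make $\|G^{1/2}SG^{1/2}\|_1$ an explicit function of $|\alpha|^2,|\beta|^2,|\gamma|^2$ and $\mathrm{Re}(\alpha\beta\gamma)$.

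A useful check comes from Schur–Weyl duality: for qubits the sign isotypic component of $(\mathbb C^2)^{\otimes 3}$ is empty. This forces $\hat f(\mathrm{sgn})=0$, i.e.
\[
1-|\alpha|^2-|\beta|^2-|\gamma|^2+2\,\mathrm{Re}(\alpha\beta\gamma)=0
\]
for qubits. This identity is the key constraint that caps the trace norm below $2$.

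The main obstacle is the final optimization. It means maximizing the resulting expression over qubit overlaps subject to the Bloch-sphere constraint, of which the identity above is one consequence. I expect the optimizer to be the symmetric trine: three Bloch vectors at $120^\circ$ on a great circle, so $|\alpha|^2=|\beta|^2=|\gamma|^2=1/4$. I would first verify that this configuration gives trace norm $3/2$, hence $\mathcal P=7/8$, which establishes achievability. For the upper bound I would parametrize by the three Bloch angles and use the symmetry of the objective under permuting $a,b,c$ to argue the maximum sits at the symmetric point. If that symmetry argument is not clean, I would fall back on bounding the standard-block contribution directly, via a Cauchy–Schwarz inequality on its $2\times2$ matrix together with the sign-block identity above.
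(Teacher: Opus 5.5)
Your reductions are sound: Helstrom, convexity down to pure product states, and the Fourier block structure in which only the two-dimensional irrep survives. Your $\hat f(\mathrm{sgn})=0$ is the Gram-side counterpart of the paper's splitting of $(\mathbb{C}^2)^{\otimes 3}$ into the symmetric subspace plus two copies of the standard irrep, which is what produces its blocks $\mathbb{A},\mathbb{B}$. Achievability at the trine is also fine.

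The gap is the upper bound, which is the real content of ``maximum'' and which you leave as a plan. Your primary route does not work. A symmetric objective on a non-convex domain need not be maximized at a symmetric point. In any case, permutation symmetry only forces $|\alpha|=|\beta|=|\gamma|$. That is a one-parameter family (equal pairwise Bloch angles) along which the value runs from $0$ for three equal states up to $3/2$ at the trine, so symmetry cannot single out the trine. Your fallback, a Cauchy--Schwarz bound on the $2\times2$ block, can at best give the first half of the estimate. The sign-block identity only trades $\mathrm{Re}(\alpha\beta\gamma)$ for $\Sigma:=|\alpha|^2+|\beta|^2+|\gamma|^2$; what is missing is a lower bound on $\Sigma$.

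The argument closes within your framework as follows. Since the trivial and sign blocks contribute nothing, $\|\rho_e-\rho_o\|_1$ is the trace norm of the standard part. By H\"older with $\|S\|_\infty=1$, this is at most the trace of that part:
\[\|\rho_e-\rho_o\|_1\le\tfrac13\bigl(\operatorname{tr}G-\hat f(\mathrm{triv})-\hat f(\mathrm{sgn})\bigr)=\tfrac13(6-2\Sigma),\]
because your identity gives $\hat f(\mathrm{triv})=1+\Sigma+2\,\mathrm{Re}(\alpha\beta\gamma)=2\Sigma$. Equivalently, $\|\rho_e-\rho_o\|_1\le 2(1-\langle\psi|\Pi_{\mathrm{sym}}|\psi\rangle)$. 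In terms of Bloch vectors, $\Sigma=\tfrac34+\tfrac14|\vec a+\vec b+\vec c|^2\ge\tfrac34$. Alternatively, your identity and AM--GM give $\Sigma\ge1-2|\alpha\beta\gamma|\ge1-2(\Sigma/3)^{3/2}$, which forces $\Sigma\ge\tfrac34$. Hence $\|\rho_e-\rho_o\|_1\le\tfrac32$ and $\mathcal P\le\tfrac78$, with equality only when $\vec a+\vec b+\vec c=0$, i.e.\ the trine.

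Carrying your Fourier computation through in fact gives the exact value $\tfrac23\sqrt{(3-\Sigma)^2-4(x_1^2+x_2^2+x_3^2-x_1x_2-x_2x_3-x_3x_1)}$, with $x_1=|\alpha|^2$, $x_2=|\beta|^2$, $x_3=|\gamma|^2$. The subtracted term vanishes exactly when $x_1=x_2=x_3$; that is the precise sense in which your symmetry intuition is right.

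The paper does not give this step analytically either. After a without-loss-of-generality parametrization $\ket{0}\otimes(\cdot)\otimes(\cdot)$, it just reports the optimum of $\tfrac12(1+\|\mathbb{A}-\mathbb{B}\|_{\mathrm{tr}})$ over $\theta_1,\theta_2,\eta$. Your route, once completed, therefore yields a genuine proof of optimality and identifies all optimizers.
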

\begin{proof}
For the three-qubit Hilbert state $(\mathbb{C}^2)^{\otimes 3}$ we consider the following orthonormal basis:
\begin{align}
\left.\begin{aligned}
\ket{\psi_1}&=\ket{000},\hspace{1.5cm} \ket{\psi_2}=\ket{111},\\
\ket{\psi_3}&=\tfrac{1}{\sqrt{3}}(\ket{001}+\ket{010}+\ket{100}),\\
\ket{\psi_4}&=\tfrac{1}{\sqrt{3}}(\ket{110}+\ket{101}+\ket{011}),\\
\ket{\psi_5}&=\tfrac{1}{\sqrt{3}}(\ket{001}+\Gamma\ket{010}+\Gamma^2\ket{100}),\\
\ket{\psi_6}&=\tfrac{1}{\sqrt{3}}(\ket{110}+\Gamma\ket{101}+\Gamma^2\ket{011}),\\
\ket{\psi_7}&=\tfrac{1}{\sqrt{3}}(\ket{001}+\Gamma^2\ket{010}+\Gamma\ket{100}),\\
\ket{\psi_8}&=\tfrac{1}{\sqrt{3}}(\ket{110}+\Gamma^2\ket{101}+\Gamma\ket{011})
\end{aligned}\right\}
\end{align}

The states $\{\ket{\psi_i}\}_{i=1}^4$ span the symmetric subspace and are invariant under any permutation. In contrast, the subspaces spanned by $\{\ket{\psi_5}, \ket{\psi_6}\}$ and $\{\ket{\psi_7}, \ket{\psi_8}\}$ are invariant under even permutations but are mapped into one another under odd permutations. Table~\ref{tab1} summarizes the transformation of the basis states under even and odd permutations.  
\begin{table}[h!]
\begin{tabular}{|c|c|c||c|c|c|}
\hline
$123~(\mathrm{E}_1)$  & $312~(\mathrm{E}_2)$ & $213~(\mathrm{E}_3)$ & $213~(\mathrm{O}_1)$ & $132~(\mathrm{O}_2)$ & $321~(\mathrm{O}_3)$ \\ \hline\hline
$\ket{\psi_1}$ & $\ket{\psi_1}$ & $\ket{\psi_1}$ & $\ket{\psi_1}$ & $\ket{\psi_1}$ & $\ket{\psi_1}$ \\ \hline
$\ket{\psi_2}$ & $\ket{\psi_2}$ & $\ket{\psi_2}$ & $\ket{\psi_2}$ & $\ket{\psi_2}$ & $\ket{\psi_2}$ \\ \hline
$\ket{\psi_3}$ & $\ket{\psi_3}$ & $\ket{\psi_3}$ & $\ket{\psi_3}$ & $\ket{\psi_3}$ & $\ket{\psi_3}$ \\ \hline
$\ket{\psi_4}$ & $\ket{\psi_4}$ & $\ket{\psi_4}$ & $\ket{\psi_4}$ & $\ket{\psi_4}$ & $\ket{\psi_4}$ \\ \hline\hline
$\ket{\psi_5}$ & $\Gamma\ket{\psi_5}$   & $\Gamma^2\ket{\psi_5}$ & $\ket{\psi_7}$ & $\Gamma\ket{\psi_7}$   & $\Gamma^2\ket{\psi_7}$ \\ \hline
$\ket{\psi_6}$ & $\Gamma\ket{\psi_6}$   & $\Gamma^2\ket{\psi_6}$ & $\ket{\psi_8}$ & $\Gamma\ket{\psi_8}$   & $\Gamma^2\ket{\psi_8}$ \\ \hline
$\ket{\psi_7}$ & $\Gamma^2\ket{\psi_7}$ & $\Gamma\ket{\psi_7}$   & $\ket{\psi_5}$ & $\Gamma^2\ket{\psi_5}$ & $\Gamma\ket{\psi_5}$   \\ \hline
$\ket{\psi_8}$ & $\Gamma^2\ket{\psi_8}$ & $\Gamma\ket{\psi_8}$   & $\ket{\psi_6}$ & $\Gamma^2\ket{\psi_6}$ & $\Gamma\ket{\psi_6}$   \\ \hline
\end{tabular}
\caption{Behavior of the basis states $\{\ket{\psi_i}\}_{i=1}^8$ under even and odd permutations.}\label{tab1}
\end{table}

\noindent With an arbitrary three qubit state $\ket{\psi}=\sum_{i=1}^8a_i\ket{\psi_i}$, the question of identifying the odd and even parity boils down to discriminating the following two mixed states
\begin{align}
\chi^{\mathrm{E}}:=\frac{1}{3}\sum_{\pi\in E_3}\pi(\ket{\psi}\bra{\psi}),~~\chi^{\mathrm{O}}:=\frac{1}{3}\sum_{\pi\in O_3} \pi(\ket{\psi}\bra{\psi}).
\end{align}
where $E_n, O_n \subset S_n$ denote the sets of even and odd permutations respectively and $S_n$ is  the symmetric group of $n$ elements.

The states $\chi^{\mathrm{E}}$ and $\chi^{\mathrm{O}}$ when represented in $\{\ket{\psi_i}\}_{i=1}^8$ basis take the block-diagonal forms
\begin{align}
\chi^{\mathrm{E}}= \renewcommand\arraystretch{1.5}
\left[\begin{array}{@{}c|c@{}c@{}}
\mathbb{M} & \mathbb{O}_4& \\
\hline
\mathbb{O}_4 & \begin{array}{@{}c|c@{}c@{}}
\mathbb{A} & \mathbb{O}_2& \\
\hline
\mathbb{O}_2 & \mathbb{B}&
\end{array}&
\end{array}\right],~~~~
\chi^{\mathrm{O}}=\renewcommand\arraystretch{1.5}
\left[\begin{array}{@{}c|c@{}c@{}}
\mathbb{M} & \mathbb{O}_4& \\
\hline
\mathbb{O}_4 & \begin{array}{@{}c|c@{}c@{}}
\mathbb{B} & \mathbb{O}_2& \\
\hline
\mathbb{O}_2 & \mathbb{A}&
\end{array}&
\end{array}\right],
\end{align}
where $\mathbb{M}$ is a $4\times 4$ matrix and $\mathbb{O}_k$ represents $k\times k$ zero matrix, and 
\begin{align}
\mathbb{A}=\begin{bmatrix}
|a_5|^2 & a_{5}a_{6}^*\\
a_{5}^*a_{6} & |a_6|^2
\end{bmatrix},~~~
\mathbb{B}=\begin{bmatrix}
|a_7|^2& a_{7}a_{8}^* \\
a_{7}^*a_{8}&|a_8|^2
\end{bmatrix}.
\end{align}
The success probability $\mathcal{P}^{[3]}_{\psi}$ of the task $\mathbb{P}[3]$ with the initial prepared state $\ket{\psi}\in(\mathbb{C}^2)^{\otimes3}$ thus boils down to the success of discriminating the states $\chi^{\mathrm{E}}$ and $\chi^{\mathrm{O}}$, which according to the seminal result of Helstrom \cite{Helstrom1969} reads as
\begin{align*}
\mathcal{P}^{[3]}_{\psi}=\frac{1}{2}\left(1+\tfrac{1}{2}||\chi^{\mathrm{E}}-\chi^{\mathrm{O}}||_{\text{tr}}\right)=\frac{1}{2}\left(1+||\mathbb{A}-\mathbb{B}||_{\text{tr}}\right).
\end{align*}
Denoting eigenvalues of the operator $\mathbb{A}-\mathbb{B}$ as $\lambda_1$ and $\lambda_2$, we have
\begin{align}
\mathcal{P}^{[3]}_{\psi}&=\frac{1}{2}\Big[1+|\lambda_{1}|+|\lambda_{2}|\Big]=\frac{1}{2}\Big[1+\Big(4|a_5a_8-a_6a_7|^2\nonumber\\
&\hspace{1.5cm}+\big(|a_5|^2+|a_6|^2-|a_7|^2-|a_8|^2\big)^2\Big)^{\frac{1}{2}}\Big].\label{optsucc}
\end{align}
Without loss of any generality, we can consider a three qubit product of the form
{\footnotesize
\begin{align}
\ket{\psi_p}&=\ket{0}\otimes\left(\mathrm{C}_{\theta_1}\ket{0}+\mathrm{S}_{\theta_1}\ket{1}\right)\otimes\left(\mathrm{C}_{\theta_2}\ket{0}+e^{i\eta}\mathrm{S}_{\theta_2}\ket{1}\right),
\end{align}}
where, $\mathrm{C}_{\alpha}\equiv\cos(\alpha/2)$ and $\mathrm{S}_{\alpha}\equiv\sin(\alpha/2)$.
Expressing $\ket{\psi_p}=\sum_{i=1}^8a_i\ket{\psi_i}$, we obtain the coefficients  
\begin{align}
\left.\begin{aligned}
a_5&=\frac{1}{\sqrt{3}}(e^{i\eta}C_{\theta_{1}}S_{\theta_{2}}+\Gamma^{2}S_{\theta_{1}}C_{\theta_{2}})\\
a_6&=\frac{\Gamma}{\sqrt{3}}e^{i\eta}S_{\theta_{1}}S_{\theta_{2}}\\
a_7&=\frac{1}{\sqrt{3}}(e^{i\eta}C_{\theta_{1}}S_{\theta_{2}}+\Gamma S_{\theta_{1}}C_{\theta_{2}})\\
a_8&=\frac{\Gamma^2}{\sqrt{3}}e^{i\eta}S_{\theta_{1}}S_{\theta_{2}}
\end{aligned}\right\}.
\end{align}
Now, putting these in Eq.(\ref{optsucc}) and optimizing over  $\theta_1,~\theta_2,~\&~\eta$ we have 
\begin{align}
\left.\begin{aligned}
&\mathcal{P}^{[3]}_{\psi^\star_{p}}=\frac{1}{2}\left(1+\frac{3}{4}\right)=\frac{7}{8},~\text{achieved~at} \\
&\theta_1=\frac{2\pi}{3},~~\theta_2=\frac{4\pi}{3},~~\eta=0.
\end{aligned}\right\}.
\end{align}
Accordingly the product state reads as

(see Fig.~\ref{fig1}):
\begin{align}
\ket{\psi^\star_p}=\ket{0}\otimes\left(\tfrac{1}{2}\ket{0}+\tfrac{\sqrt{3}}{2}\ket{1}\right)\otimes\left(-\tfrac{1}{2}\ket{0}+\tfrac{\sqrt{3}}{2}\ket{1}\right).\hspace{-.1cm}  
\end{align}
This completes the proof. 
\end{proof}
Having analyzed the scenario under product state preparation, we now seek to determine the nature of the measurement that achieves optimal success for this optimal product state preparation.The task described above is essentially that of discriminating between two quantum states. It is well known that the optimal measurement for distinguishing between two given states is the Helstrom measurement.

For $\rho, \sigma \in \mathbb{C}^d$, the Helstrom measurement is unique if and only if $\mathrm{rank}(\rho - \sigma) = d$. 

If we write down $\chi_\mathrm{E}$ and $\chi_\mathrm{O}$ using the state in Eq.~(9), we observe that the operator $\chi_\mathrm{E} - \chi_\mathrm{O}$ has eigenvalues $\{0, 0, 0, 0, \tfrac{3}{8}, \tfrac{3}{8}, -\tfrac{3}{8}, -\tfrac{3}{8}\}$, with the corresponding eigenvectors:
\begin{align*}
\ket{\phi_i}=\ket{\psi_i}\quad\text{for}~ i=1,2,3,4\quad\text{and},\hspace{1cm}\\
\big(\ket{\phi_5},\ket{\phi_6},\ket{\phi_7},\ket{\phi_8}\big)^{\mathrm{T}}=\mathbb{F}\big(\ket{\psi_5},\ket{\psi_6},\ket{\psi_7},\ket{\psi_8}\big)^{\mathrm{T}}.  
\end{align*}
where,
\begin{align*}
\mathbb{F} =
\begin{bmatrix}
\frac{\Gamma^2}{2} & \frac{\Gamma-1}{2\sqrt{3}} & \frac{\Gamma}{2} & \frac{\Gamma^2-1}{2\sqrt{3}} \\
\frac{\Gamma-1}{2\sqrt{3}} & \frac{\Gamma^2}{2} & \frac{\Gamma^2-1}{2\sqrt{3}} & \frac{\Gamma}{2} \\
\frac{\Gamma^2}{2} & \frac{1-\Gamma}{2\sqrt{3}} & \frac{\Gamma}{2} & \frac{1-\Gamma^2}{2\sqrt{3}} \\
\frac{1-\Gamma}{2\sqrt{3}} & \frac{\Gamma^2}{2} & \frac{1-\Gamma^2}{2\sqrt{3}} & \frac{\Gamma}{2} 
\end{bmatrix}.
\end{align*}
Consequently, the Helstrom measurement $\mathrm{M}\equiv\big\{\mathrm{h_E},\mathrm{h_O}\big\}$ that optimally distinguish the states $\chi^{\mathrm{E}}$ and $\chi^{\mathrm{O}}$ has different choices. For instance, in Table~\ref{tab2} we list three such measurements that achieve the optimal success $7/8$ in Proposition~\ref{prop1}. While both effects in measurement $\mathrm{M}^3$ are NPT across every bi-separable cut and hence are entangled, in $\mathrm{M}^2$, one effect is NPT and the other is PPT across all bipartitions. On the other-hand, in $\mathrm{M}^1$, both effects are PPT across all bi-separable cuts. However, we are currently unable to determine the entanglement properties of these effects. These questions remain open to be resolved.

\begin{table}[t!]
\begin{tabular}{|c|c|c|c|}
\hline
Measurement & Components & Rank & Feature \\ \hline\hline
$\mathrm{M}^1$ & 
$\begin{aligned}
\mathrm{h^1_E} &=\Phi_{56}+\frac{1}{2}\Phi_{1234} \\
\mathrm{h^1_O} &=\Phi_{78}+\frac{1}{2}\Phi_{1234}
\end{aligned}$ 
& 
$\begin{aligned}
&6 \\
&6
\end{aligned}$ 
& 
$\begin{aligned}
\text{PPT} \\
\text{PPT}
\end{aligned}$ \\ \hline\hline
$\mathrm{M}^2$ & 
$\begin{aligned}
\mathrm{h^2_E} &= \Phi_{123456} \\
\mathrm{h^2_O} &= \Phi_{78}
\end{aligned}$ 
& 
$\begin{aligned}
&6 \\
&2
\end{aligned}$ 
& 
$\begin{aligned}
\text{PPT} \\
\text{NPT}
\end{aligned}$ \\ \hline\hline

$\mathrm{M}^3$ & 
$\begin{aligned}
\mathrm{h^3_O} &= \Phi_{1256} \\
\mathrm{h^3_O} &= \Phi_{3478}
\end{aligned}$ 
& 
$\begin{aligned}
&4 \\
&4
\end{aligned}$ 
& 
$\begin{aligned}
\text{NPT} \\
\text{NPT}
\end{aligned}$ \\ \hline
\end{tabular}
\caption{All these measurements optimally distinguishes the states $\chi_{\mathrm{E}}$ and $\chi_{\mathrm{O}}$. Here,  $\Phi_{ij}:=\ket{\phi_i}\bra{\phi_i}+\ket{\phi_j}\bra{\phi_j}$ and so on.}\label{tab2}
\vspace{-.5cm}
\end{table}

A natural question that arises at this point is: what is the success probability of the $\mathbb{P}[3]$ game when the shared state is only biseparable. In our next Proposition we address this question.\\

\subsection{Bi-separable Preparation}
We next consider the case where the shared state is biseparable, i.e., it belongs to the convex hull of states that are separable with respect to one of the three bipartitions. The following proposition establishes the optimal success probability achievable in this setting.
\begin{proposition}\label{prop2}
The maximum success probability for solving the $\mathbb{P}[3]$ task with three-qubit biseparable states is  $\frac{1}{2}\big(1+\sqrt{2/3}\big)$.
\end{proposition}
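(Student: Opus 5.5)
I will reuse the reduction from the proof of Proposition~\ref{prop1}. For a general (possibly mixed) preparation $\rho$, write $\rho$ in the basis $\{\ket{\psi_i}\}_{i=1}^8$. By Table~\ref{tab1}, the even-averaged and odd-averaged states $\chi^{\mathrm{E}},\chi^{\mathrm{O}}$ have the following structure. The block on the symmetric subspace (spanned by $\ket{\psi_1},\dots,\ket{\psi_4}$) is identical in both. All coherences between the $\{5,6\}$ and $\{7,8\}$ sectors are killed by the $\Gamma$-phases. The $\{5,6\}$ and $\{7,8\}$ $2\times2$ blocks $\mathbb{A}$ and $\mathbb{B}$ are swapped under odd permutations. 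The success probability is therefore
\begin{align*}
\mathcal{P}_\rho=\frac{1}{2}\bigl(1+\|\mathbb{A}_\rho-\mathbb{B}_\rho\|_{\mathrm{tr}}\bigr),
\end{align*}
where $\mathbb{A}_\rho$ and $\mathbb{B}_\rho$ are the compressions of $\rho$ onto $\mathrm{span}\{\ket{\psi_5},\ket{\psi_6}\}$ and $\mathrm{span}\{\ket{\psi_7},\ket{\psi_8}\}$, with the natural identification $\ket{\psi_5}\leftrightarrow\ket{\psi_7}$, $\ket{\psi_6}\leftrightarrow\ket{\psi_8}$.

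The map $\rho\mapsto\mathbb{A}_\rho-\mathbb{B}_\rho$ is linear and the trace norm is convex. Hence the maximum over the biseparable set, which is the convex hull of states separable across $1|23$, $2|13$ and $3|12$, is attained on an extreme point. It thus suffices to maximise over pure states $\ket{\phi}\otimes\ket{\chi}$ with $\ket{\phi}\in\mathbb{C}^2$ and $\ket{\chi}\in\mathbb{C}^2\otimes\mathbb{C}^2$ arbitrary, for the cut $1|23$. The other cuts are related by a relabelling of particles, and a relabelling leaves $\mathcal{P}$ invariant because it just reshuffles the even and odd orbits, with the odd one possibly swapped.

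Next I use local symmetry to reduce parameters. The group $U^{\otimes3}$ commutes with all permutations, so $\mathcal{P}$ is invariant under it, and I can take $\ket{\phi}=\ket{0}$. Write $\ket{\chi}=c_{00}\ket{00}+c_{01}\ket{01}+c_{10}\ket{10}+c_{11}\ket{11}$. Then read off $a_5,\dots,a_8$ as in the product case:
\begin{align*}
a_5&=\tfrac{1}{\sqrt3}\bigl(c_{01}+\Gamma^2 c_{10}\bigr), & a_6&=\tfrac{\Gamma}{\sqrt3}\,c_{11},\\
a_7&=\tfrac{1}{\sqrt3}\bigl(c_{01}+\Gamma c_{10}\bigr), & a_8&=\tfrac{\Gamma^2}{\sqrt3}\,c_{11}.
\end{align*}
The expression \eqref{optsucc} for the success probability then becomes an explicit function of $c$:
\begin{align*}
4|a_5a_8-a_6a_7|^2+\bigl(|a_5|^2+|a_6|^2-|a_7|^2-|a_8|^2\bigr)^2 .
\end{align*}
Here $a_5a_8-a_6a_7=\tfrac{1}{3}c_{11}\bigl[(\Gamma^2-\Gamma)c_{01}+(\Gamma^4-\Gamma^2)c_{10}\bigr]$, which simplifies to $\tfrac{\Gamma^2-\Gamma}{3}\,c_{11}(c_{01}-c_{10})$. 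Also $|a_5|^2-|a_7|^2=\tfrac{2}{3}\mathrm{Re}\bigl[c_{01}^*c_{10}(\Gamma^2-\Gamma)\bigr]$, while $|a_6|=|a_8|$.

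The key step is then a finite-dimensional optimisation. Put $x=c_{11}$, $u=c_{01}$, $v=c_{10}$, and use $|\Gamma^2-\Gamma|=\sqrt3$. The quantity to maximise is
\begin{align*}
\tfrac{4}{3}|x|^2|u-v|^2+\tfrac{4}{3}\bigl(\mathrm{Im}(u^*v)\bigr)^2
\end{align*}
subject to $|x|^2+|u|^2+|v|^2\le1$; setting $c_{00}=0$ is optimal. The first term favours antisymmetric $u=-v$, the second favours a relative phase $\pm i$. I will parametrise $|x|^2=t$ and $|u|^2+|v|^2=1-t$. Then $|u-v|^2\le2(1-t)$ and $|\mathrm{Im}(u^*v)|\le(1-t)/2$, with a trade-off through the common relative phase of $u$ and $v$. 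Optimising over that phase and over $t$ should give a maximal squared trace norm of $2/3$, i.e.
\begin{align*}
\mathcal{P}=\tfrac12\bigl(1+\sqrt{2/3}\bigr).
\end{align*}
For example, $u=-v$ with $t=1/2$ gives $\tfrac{4}{3}\cdot\tfrac12\cdot1=\tfrac23$, and I expect this to be the maximum. I will then exhibit the optimiser explicitly, $\ket{0}\otimes\bigl(\ket{11}/\sqrt2+(\ket{01}-\ket{10})/2\bigr)$ up to local unitaries, as the achievability part.

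The main obstacle is this last joint maximisation. I have to verify that mixing the two terms, using a relative phase between $u$ and $v$ that is neither real nor purely imaginary, never beats $2/3$. I will handle it by writing $u=r_1$ and $v=r_2e^{i\varphi}$, maximising in closed form in $t$ for fixed $\varphi$, and then checking $\varphi=\pi$ is the global optimum by a one-variable derivative argument. I also have to double-check the earlier reduction: that an arbitrary two-qubit $\ket{\chi}$ is fully captured once the single-qubit $U^{\otimes3}$ freedom has been used on the first qubit.
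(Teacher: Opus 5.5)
Your proposal is correct. It shares the paper's skeleton: reduce to the Helstrom expression $\tfrac12(1+\|\mathbb{A}-\mathbb{B}\|_{\mathrm{tr}})$ of Eq.~(\ref{optsucc}) and optimise over states $\ket{0}\otimes\ket{\chi}$. You differ from the paper in how this restriction is justified and in how the optimisation is carried out.

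The paper posits the Schmidt form $\ket{0}_A\otimes(C_\alpha\ket{\phi}\ket{\xi}+S_\alpha\ket{\phi^\perp}\ket{\xi^\perp})$ ``without loss of generality'', with the phases of $\ket{\phi^\perp},\ket{\xi^\perp}$ fixed by convention. It then reports the result of optimising over $(\alpha,\theta_1,\theta_2,\eta)$, but gives no upper-bound argument.

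You supply the reductions the paper leaves implicit:
\begin{itemize}
\item convexity of $\rho\mapsto\|\mathbb{A}_\rho-\mathbb{B}_\rho\|_{\mathrm{tr}}$, to pass to pure states;
\item the even/odd relabelling argument, to cover the other two cuts;
\item $U^{\otimes3}$-covariance, to fix the first qubit to $\ket{0}$.
\end{itemize}
You also parametrise $\ket{\chi}$ by computational-basis amplitudes, which are manifestly general. This collapses the objective to the quartic $F=\tfrac43\bigl(|x|^2|u-v|^2+(\mathrm{Im}\,u^*v)^2\bigr)$, and that is what makes a real optimality proof possible.

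The step you flag as the obstacle does close. Put $s=1-t$, $z=u^*v$ and $a=|\mathrm{Re}\,z|$. Then $(\mathrm{Im}\,z)^2\le s^2/4-a^2$, so $\tfrac34F\le ts+2ta+s^2/4-a^2$. Maximising over $a\in[0,s/2]$ gives:
\begin{itemize}
\item $2t(1-t)\le\tfrac12$ if $t\ge\tfrac13$;
\item $t+(1-t)^2/4<\tfrac49$ if $t<\tfrac13$.
\end{itemize}
Hence $F\le\tfrac23$, with equality at $t=\tfrac12$, $u=-v$.

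Your $\varphi$-route also works, but only if you optimise the split $|u|:|v|$ as well as $t$. For $\cos\varphi\le0$ the optimal split is $|u|=|v|$, and one gets $\tfrac34F\le(1-\cos\varphi)/(3-\cos\varphi)$, which is largest at $\varphi=\pi$.

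As a consistency check, your optimiser $\ket{0}\otimes(\ket{11}/\sqrt2+(\ket{01}-\ket{10})/2)$ has Schmidt coefficients $\cos(\pi/12),\sin(\pi/12)=\tfrac{\sqrt3\pm1}{2\sqrt2}$. These are exactly the coefficients in Eq.~(\ref{biopt}).

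The paper's Schmidt parametrisation buys a direct view of the entanglement structure of the optimal state. Yours buys a complete, analytic proof that $\tfrac12(1+\sqrt{2/3})$ cannot be exceeded over the full biseparable set.
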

\begin{proof}
Without loss of any generality, consider a three qubit biseparable state as 
\begin{align}
\left.\begin{aligned}
&\ket{\psi_{bi}}=\ket{0}_{A}\otimes\left(C_{\alpha}\ket{\phi}\ket{\xi}+S_{\alpha}\ket{\phi^{\perp}}\ket{\xi^{\perp}}\right)_{BC},\\    
&\ket{\phi}=C_{\theta_{1}}\ket{0}+S_{\theta_{1}}\ket{1},
\ket{\xi}=C_{\theta_{2}}\ket{0}+S_{\theta_{2}}e^{i\eta}\ket{1}
\end{aligned}\right\}\hspace{-.2cm}
\end{align}
where $\ket{\phi^\perp}$ denotes the ket orthogonal to $\ket{\phi}$. Expressing $\ket{\psi_{bi}}=\sum_{i=1}^8a_i\ket{\psi_i}$, we obtain the coefficients  
\begin{align}
\left.\begin{aligned}
a_5=\frac{1}{\sqrt{3}}&\Big[e^{i\eta}(C_{\alpha}C_{\theta_{1}}S_{\theta_{2}}-S_{\alpha}S_{\theta_{1}}C_{\theta_{2}})\\
&\hspace{1cm}+\Gamma^2(C_{\alpha}S_{\theta_{1}}C_{\theta_{2}}-S_{\alpha}C_{\theta_{1}}S_{\theta_{2}})\Big],\\
a_6=\frac{\Gamma}{\sqrt{3}}&e^{i\eta}(C_{\alpha}S_{\theta_{1}}S_{\theta_{2}}+S_{\alpha}C_{\theta_{1}}C_{\theta_{2}}),\\
a_7=\frac{1}{\sqrt{3}}&\Big[e^{i\eta}(C_{\alpha}C_{\theta_{1}}S_{\theta_2}-S_{\alpha}S_{\theta_{1}}C_{\theta_{2}})\\
&\hspace{1cm}+\Gamma(C_{\alpha}S_{\theta_{1}}C_{\theta_{2}}-S_{\alpha}C_{\theta_{1}}S_{\theta_{2}})\Big],\\
a_8=\frac{\Gamma^2}{\sqrt{3}}&e^{i\eta}(C_{\alpha}S_{\theta_{1}}S_{\theta_{2}}+S_{\alpha}C_{\theta_{1}}C_{\theta_{2}})
\end{aligned}\right\}.
\end{align}
Now, putting these in Eq.(\ref{optsucc}) and optimizing over  $\alpha~,\theta_1,~\theta_2,~\&~\eta$ we have 
\begin{align}
\left.\begin{aligned}
&\mathcal{P}^{[3]}_{\psi^\star_{bi}}=\frac{1}{2}\left(1+\sqrt{\frac{2}{3}}\right),~\text{achieved~at} \\
&\alpha=\frac{\pi}{12},~~\theta_1=\frac{15\pi}{23},~~\theta_2=\frac{8\pi}{23},~~\eta=0.
\end{aligned}\right\}
\end{align}
Accordingly the biseparable reads as
\begin{align}
\ket{\psi^\star_{bi}}=\ket{0}\otimes\left(\tfrac{\sqrt{3}+1}{2\sqrt{2}}\ket{\phi\xi}+\tfrac{\sqrt{3}-1}{2\sqrt{2}}\ket{\phi^{\perp}\xi^{\perp}}\right),\label{biopt}
\end{align}
with, 
\begin{align*}
   \ket{\phi}&=C_{\frac{8\pi}{23}}\ket{0}+S_{\frac{8\pi}{23}}\ket{1} \\
   \ket{\xi}&=C_{\frac{15\pi}{23}}\ket{0}+S_{\frac{15\pi}{23}}\ket{1}
\end{align*}
This completes the proof. 
\end{proof}

Having analyzed the scenario under bi-separable state preparation, we now seek to determine the nature of the measurement that achieves optimal success for this optimal Bi-separable state preparation.\\
By following the same procedure as in product state preparation—specifically, by computing the eigenvalues and eigenvectors of $\chi_E - \chi_O$ with the state in Eq.\textcolor{red}{(16)}—we observe that all measurement settings presented in Table\textcolor{red}{II} yield the optimal bi-separable success probability.

The entangled state in Eq.(\ref{biopt}) is very special. It appears in Mean-King problem \cite{Vaidman1987} as well as in Elegant Joint measurement \cite{Gisin2019} (see also \cite{Patra2026, Agarwal2026}). While Proposition \ref{prop2} establishes that perfect success of $\mathbb{P}[3]$ task with three-qubit system requires genuine entangled state preparation as well as entangled measurement, in the next section we analyze this task within more general framework of GPT.     

\section{Parity games in GPT\MakeLowercase{s}}
The origin of GPTs dates back to the 1960s, with the motivation of axiomatic derivation of deriving Hilbert space quantum mechanics \cite{Mackey1963,Ludwig1967,Ludwig1968,Davies1970}. With the advent of quantum information theory, the framework has gained renewed prominence as a versatile tool for probing the foundations of quantum mechanics and exploring possible alternatives \cite{Hardy2001,Barrett2007,Barnum2011,Masanes2011,Chiribella2011}. While GPTs specify the structure of elementary systems, they also prescribe rules for composing them into multipartite systems.

\subsection{Framework of GPTs} 
\noindent An elementary system $\mathcal{S}$ is specified by the triple $(\Omega_{\mathcal{S}}, \mathcal{E}_{\mathcal{S}}, \mathcal{T}_{\mathcal{S}})$ of state space, effect space, and the set of transformations. The state space $\Omega_{\mathcal{S}}$ consists of normalized states, assumed to be a convex and compact subset of a finite-dimensional real vector space $\mathbb{R}^d$. A convenient representation of a state is given by $\omega = (\vec{v},\,1)^{\mathrm{T}}$, where $\vec{v}\in \mathbb{R}^{d-1}$, and the final component is fixed to encode normalization. The effect space $\mathcal{E}_{\mathcal{S}}$ comprises all linear functionals mapping normalized states to probabilities, i.e. $\mathcal{E}_{\mathcal{S}} \equiv \{e:\Omega_{\mathcal{S}}\mapsto [0,1]\}$. Two special effects are the unit effect $u$ and the zero effect $\Theta$ respectively satisfying $u(\omega)=1$ and $\Theta(\omega)=0$ for all $\omega \in \Omega_{\mathcal{S}}$. It is often convenient to consider unnormalized states and effects, which respectively form convex cones: $\mathbf{V}_{\Omega} := \{\mu\,\omega \;|\; \omega\in\Omega_{\mathcal{S}},~\mu\in\mathbb{R}_{\ge 0}\},~~\mathbf{V}_{\mathcal{E}} := \{\mu\,e \;|\; e\in\mathcal{E}_{\mathcal{S}},~\mu\in\mathbb{R}_{\ge 0}\}$, dual to each other. The transformation space $\mathcal{T}_{\mathcal{S}}$ contains linear maps that send states to states, i.e. $T:~\Omega_S\mapsto\Omega_S$.  
\begin{definition}\label{def1}
[Linear dimension] For a GPT system $\mathcal{S}$, its linear dimension $\mathtt{Ld}_{\mathcal{S}}$ corresponds to the dimension of the vector space spanned by its unnormalized states. 
\end{definition}
\noindent An alternative and more operational notion is the operational dimension (OD). A set of states $\{\omega_j\}\subseteq\Omega_{\mathcal{S}}$ are said to be perfectly distinguishable if there exists a measurement $\mathrm{M}\equiv\{e_i\mid \sum_i e_i = u\}$ such that $e_i(\omega_j) = \delta_{ij}$.
\begin{definition}\label{def2}
[Operational dimension] For a GPT system $\mathcal{S}$, its operational dimension $\mathtt{Od}_{\mathcal{S}}$ corresponds to the maximum number of states that can be perfectly distinguished by a single measurement. 
\end{definition}
\noindent For two GPT systems OD can be same even when their LDs are different. For instance, a classical bit and a qubit have LDs $2$ and $4$ respectively, while both have OD $2$. In fact, a GPT system, such as the hyper-sphere theory has OD$=2$, even though its LD can be arbitrarily large \cite{Masanes2014}. As we shall show, both LD and OD play a central role in analyzing the permutation parity problem. 

Given two systems $\mathcal{S}_A$ and $\mathcal{S}_B$, the GPT framework also prescribes rules for constructing the corresponding composite system. Any such composition is required to satisfy the no-signaling (NS) principle, which forbids instantaneous communication between subsystems. A further assumption, known as local tomography (LT) \cite{Hardy2011}, requires that a composite state can be uniquely characterized (to arbitrary precision) using only local operations and classical communication (LOCC). In addition, the no-restriction hypothesis \cite{Chiribella2010} is often imposed, which demands that every mathematically well-defined state or effect consistent with valid probability assignments be regarded as physically admissible. Any composition obeying both NS and LT lies between two extreme constructions: the minimal tensor product and the maximal tensor product \cite{Namioka1969,Barker1976,Aubrun2021}.
\begin{definition}\label{def3}
The state space of the minimal tensor product $\mathcal{S}_A \otimes_{\min} \mathcal{S}_B$ is the convex hull of all product states, i.e. $\Omega^{\min}_{AB} := \mathrm{Conv.hull}\{\;\omega_A \otimes \omega_B \;\mid\; \omega_A \in \Omega_A \subsetneq \mathbb{R}^{d_A},~ \omega_B \in \Omega_B \subsetneq \mathbb{R}^{d_B}\;\}$. The effect space $\mathcal{E}^{\min}_{AB}$ consists of all linear functionals on $\mathbb{R}^{d_A} \otimes \mathbb{R}^{d_B}$ mapping $\Omega^{\min}_{AB}$ to $[0,1]$.
\end{definition}

Assuming local description to be quantum, the minimal tensor product corresponds to allowing only separable states in the state space (fully separable states for more than two subsystems). However, the associated effect space is larger than that of standard quantum theory, since entangled witness operators not being positive operator are admissible effects under minimal composition. By contrast, in the maximal tensor product, the situation is reversed: the effect space is constrained, while the state space is enlarged beyond the quantum set. For further details of the framework we refer to the recent review in \cite{Plvala2023}.
\begin{figure}[t!]
\centering
\includegraphics[width=.95\linewidth]{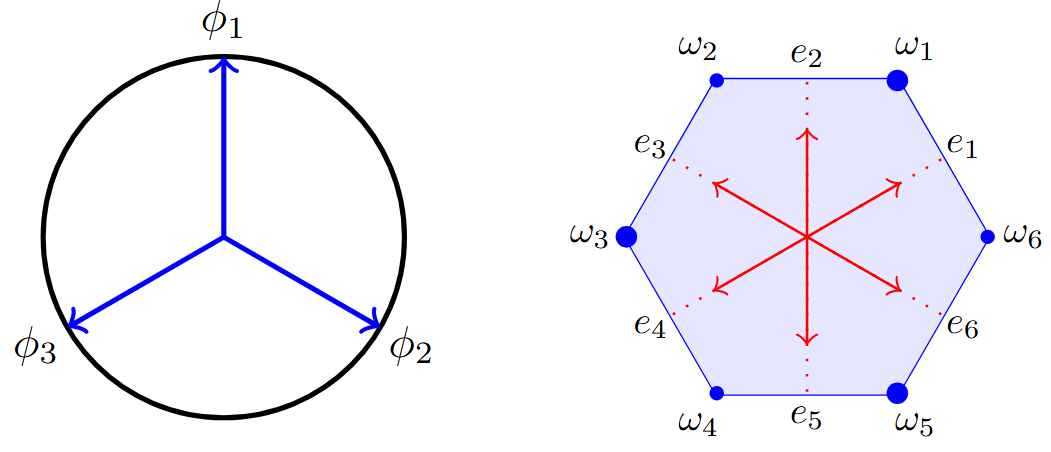}
\caption{Left: Three qubits prepared in the product state  $\ket{\phi_1}\otimes\ket{\phi_2}\otimes\ket{\phi_3}$, with $\ket{\phi_1} = \ket{0},
\ket{\phi_2} = \tfrac{1}{2}\ket{0}-\tfrac{\sqrt{3}}{2}\ket{1},
\ket{\phi_3} = \tfrac{1}{2}\ket{0}+\tfrac{\sqrt{3}}{2}\ket{1}$, yields a quantum success probability of $7/8$ in the $\mathbb{P}[3]$ task, while the same preparation achieves perfect success under minimal tensor product composition. Right: In the hexagon model, three systems prepared in the product state $\omega_1\otimes\omega_3\otimes\omega_5$ accomplish $\mathbb{P}[3]$ with certainty, without requiring entanglement either at the preparation or at the measurement stage.}\label{fig1}
\vspace{-.2cm}
\end{figure}

\subsection{$\mathbb{P}[3]$ in Locally Quantum Theory}
At this stage, a natural question is whether one can surpass the success probability of Proposition~\ref{prop2} by adopting minimal composition while retaining quantum descriptions of the local systems. In minimal composition, the effect space is strictly larger than in standard quantum theory (see Definition~\ref{def2}), which enhances the ability to distinguish separable states. Indeed, certain non-orthogonal separable states become perfectly distinguishable in this framework \cite{Arai2019}. As we show next, this feature carries significant consequences for the permutation parity problem.
\begin{proposition}\label{prop3}
In minimal composition of three qubits, the permutation parity task $\mathbb{P}[3]$ can be accomplished perfectly with product preparation. 
\end{proposition}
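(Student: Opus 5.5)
The plan is to take the optimal product preparation $\ket{\psi^\star_p}$ of Proposition~\ref{prop1} and build an explicit two-outcome measurement $\{e_{\mathrm E},e_{\mathrm O}=u-e_{\mathrm E}\}$ that is admissible in the minimal tensor product. By Definition~\ref{def3}, $e_{\mathrm E}$ may be any Hermitian operator $W$ on $(\mathbb{C}^2)^{\otimes 3}$ satisfying $0\le \tr(W\sigma)\le 1$ for all fully separable $\sigma$. By convexity it suffices to check this on pure product states $\sigma=\ket{\phi_1\phi_2\phi_3}\bra{\phi_1\phi_2\phi_3}$. Every permuted preparation $\pi(\ket{\psi^\star_p}\bra{\psi^\star_p})$ is again a product state, so the preparations remain in $\Omega^{\min}$. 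Perfect success then amounts to
\[
\tr\!\big(W\,\pi(\ket{\psi^\star_p}\bra{\psi^\star_p})\big)=1 \text{ for } \pi\in E_3,
\qquad
\tr\!\big(W\,\pi(\ket{\psi^\star_p}\bra{\psi^\star_p})\big)=0 \text{ for } \pi\in O_3 .
\]

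\textbf{Candidate effect.} Let $D:=\chi^{\mathrm E}-\chi^{\mathrm O}$ for $\ket{\psi^\star_p}$, and set
\[
W=\tfrac12\mathbb{1}+c\,D .
\]
By Table~\ref{tab1}, even permutations act on $\ket{\psi_5},\ldots,\ket{\psi_8}$ only by phases that leave the blocks $\mathbb{A},\mathbb{B}$ invariant, while odd permutations swap the two blocks. Hence $\tr(D\,\pi\psi)=\tr(D\chi^{\mathrm E})$ for every even $\pi$ and $=-\tr(D\chi^{\mathrm E})$ for every odd $\pi$. Since $D=\mathrm{diag}(0,\mathbb{A}-\mathbb{B},\mathbb{B}-\mathbb{A})$, we get $\tr(D\chi^{\mathrm E})=\tr[(\mathbb{A}-\mathbb{B})^2]$. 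The spectrum $\{\pm 3/8,\pm3/8\}$ of $D$ computed after Proposition~\ref{prop1} gives $\tr[(\mathbb{A}-\mathbb{B})^2]=2\cdot(3/8)^2=9/32$. Choosing $c=16/9$ therefore gives value $1$ on all even images and $0$ on all odd images, exactly as required.

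\textbf{Admissibility, the main obstacle.} It remains to show $0\le\tr(W\sigma)\le1$ on product states, i.e.
\[
\max_{\phi\ \mathrm{product}}\big|\bra{\phi}D\ket{\phi}\big|\le \tfrac{9}{32}.
\]
This is a separable-numerical-range bound; $W$ is not positive, since $D$ has eigenvalues $\pm3/8>9/32$, so it is a genuinely post-quantum effect. Equality is attained at $\ket{\psi^\star_p}$, and $D$ is antisymmetric under odd permutations, so the lower bound follows from the upper one.

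The route I would try first uses the identity
\[
\bra{\phi}D\ket{\phi}=\tfrac13\sum_{\pi}\mathrm{sgn}(\pi)\,\big|\bra{\phi}\pi\ket{\psi^\star_p}\big|^2 ,
\]
which is symmetric under $\phi\leftrightarrow\psi^\star_p$. Write $\phi=\ket{\phi_1\phi_2\phi_3}$ and use local-unitary covariance to fix $\phi_1=\ket0$, as in Proposition~\ref{prop1}. The expression then becomes a degree-$2$ trigonometric polynomial in each Bloch vector, and I would maximize it directly, checking stationarity at the ``trine'' configuration of $\ket{\psi^\star_p}$. Concretely, the three Bloch vectors of $\ket{\psi^\star_p}$ are $120^\circ$ apart in a great circle. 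I expect the overlap functional to reduce to $\tfrac{1}{24}\,\vec r_1\cdot(\vec r_2\times\vec r_3)$-type terms, after accounting for the identity and transposition contributions. Such a triple product is maximized by the trine, with value $9/32$.

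\textbf{Fallback.} If this closed form does not emerge cleanly, I would instead look for a different admissible $W$. I would keep the same linear constraints on the six permuted states and search, via a small optimization over Hermitian $W$ commuting with the $S_3$ action, for one that is bounded on product states. Only the block-$(\psi_5,\ldots,\psi_8)$ part of $W$ matters for the constraints, so freedom in the symmetric part (e.g.\ adding multiples of $\Phi_{1234}$) can be used to restore positivity on product states.
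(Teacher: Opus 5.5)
You have the right construction. $W=\tfrac12\mathbf{I}+\tfrac{16}{9}D$ does take the value $1$ on the three even images and $0$ on the three odd ones; your value $\tr[(\mathbb{A}-\mathbb{B})^2]=9/32$ and the sign behaviour of $D$ are correct. Up to a collective rotation of the preparation's Bloch trine into the plane orthogonal to $\hat{\mathrm{T}}$, this $W$ coincides with the paper's $\mathrm{h}'_{\mathrm{E}}$ at $\gamma=0$, and you reach it without the symmetry ansatz.

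The gap is the admissibility bound $\max_{\phi\ \mathrm{prod}}|\bra{\phi}D\ket{\phi}|\le 9/32$. This carries all the content of the proposition, you do not prove it, and the route you outline would not prove it. You expect the functional to reduce to a triple product $\vec r_1\cdot(\vec r_2\times\vec r_3)$ ``maximized by the trine''. That fails twice: a triple product of unit vectors is maximized by an orthonormal frame, and it vanishes on every coplanar configuration, the trine included. Checking stationarity at the trine would only certify a critical point, not a global maximum. Fixing $\phi_1=\ket{0}$ is also not without loss of generality, because $D$ is built from the fixed preparation and is covariant only when $\phi$ and $\psi^\star_p$ are rotated together.

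The missing step is a short computation. Let $\vec a_j$ be the Bloch vectors of the $\phi_j$ and $\vec b_k$ those of the three factors of $\psi^\star_p$. Let $A$ and $B$ be the $3\times 3$ matrices with these rows. Single-qubit overlaps are $\tfrac12(1+\vec a_j\cdot\vec b_k)$, so your overlap identity gives $\bra{\phi}D\ket{\phi}=\tfrac{1}{24}\det[1+\vec a_j\cdot\vec b_k]_{j,k}$. Write this matrix as $[\mathbf{1}|A]\,[\mathbf{1}|B]^{\mathrm{T}}$ and apply Cauchy--Binet:
\[
\bra{\phi}D\ket{\phi}=\tfrac{1}{24}\big(\det A\,\det B+S_A\cdot S_B\big),
\]
where $S_A:=\vec a_1\times\vec a_2+\vec a_2\times\vec a_3+\vec a_3\times\vec a_1=(\vec a_2-\vec a_1)\times(\vec a_3-\vec a_1)$, and $S_B$ is defined likewise. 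For the trine $\det B=0$, so the triple-product term drops out entirely. Also $S_B=\tfrac{3\sqrt3}{2}\hat n$ with $\hat n$ normal to the trine's plane, hence $\bra{\phi}D\ket{\phi}=\tfrac{\sqrt3}{16}\,S_A\cdot\hat n$. Now $|S_A|$ is twice the area of a triangle with vertices in the unit ball, so $|S_A|\le\tfrac{3\sqrt3}{2}$. This is exactly $|\bra{\phi}D\ket{\phi}|\le 9/32$, the same bound $|S|\le 3\sqrt3/2$ that the paper's proof rests on.

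Your fallback is also flawed. The permuted preparations have weight in the symmetric subspace; for example $\langle 000|\psi^\star_p\rangle=-1/4$. Adding $t\,\Phi_{1234}$ therefore shifts all six values equally and breaks the $1/0$ conditions.
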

\begin{proof}
Any valid effect in minimal tensor product composition theory of three qubits is a operator in $\mathrm{Herm}((\mathbb{C}^2)^{\otimes 3})$ satisfying the requirement that on any fully product three qubit state it yields positive probability. To perfectly win the $\mathbb{P}[3]$ task with the product preparation given in Theorem \ref{prop1}, we thus need a two-outcome measurement $\mathrm{M}\equiv\{\mathrm{h}'_{\mathrm{E}},\mathrm{h}'_{\mathrm{O}}~|~\mathrm{h}'_{\mathrm{E}}+\mathrm{h}'_{\mathrm{O}}=\mathbf{I}^{\otimes 3}\}$ such that $\tr[\mathrm{h}'_{a}\chi^b]=\delta_{ab}$ with $a,b\in\{\mathrm{E},\mathrm{O}\}$.

\noindent We will be using the notation $\{\sigma_0:=\mathbf{I},\sigma_1:=\sigma_X,\sigma_2:=\sigma_Y,\sigma_3:=\sigma_Z\}$ for the Pauli operators and $\sigma_{\mu}\otimes\sigma_{\nu}\otimes\sigma_{\eta}\equiv \sigma_{\mu} \sigma_{\nu}\sigma_{\eta} ~\forall~ \mu,\nu,\eta \in \{0,1,2,3\}$. 
 
\noindent Considering $\mathrm{h}'_{\mathrm{E}}$ as general operator in $\mathcal{L}(\mathbb{C}^{2^{\otimes 3}})$,it can be expressed in Hilbert-Schmidt form as 
\begin{align}
\mathrm{h}'_{\mathrm{E}}&=\frac{1}{2}\Big(\alpha~\mathbf{I}^{\otimes 3}
+\sum_{i=1}^{3} \alpha_i^{(1)}\sigma_i\mathbf{I}\mathbf{I}+\sum_{i=1}^{3} \alpha_i^{(2)}\mathbf{I}\sigma_i\mathbf{I}+\sum_{i=1}^{3}\alpha_i^{(3)}\mathbf{I}\mathbf{I}\sigma_i\nonumber\\
&\hspace{.8cm}+\sum_{i,j=1}^{3}\beta_{ij}^{(1)}\mathbf{I}\sigma_i\sigma_j
+\sum_{i,j=1}^{3}\beta_{ij}^{(2)}\sigma_i\mathbf{I}\sigma_j+ \sum_{i,j=1}^{3} \beta_{ij}^{(3)}\sigma_i\sigma_j\mathbf{I}\nonumber\\
&\hspace{4.2cm}+\sum_{i,j,k=1}^{3} \gamma_{ijk}~\sigma_i\sigma_j\sigma_k\Big),\label{he}    
\end{align}
where the coefficients $\alpha,\alpha_{i}^{(j)},\beta_{ij}^{(k),\gamma_{ijk}} \in\mathbb{R}$ with $i,j,k \in [3]$. Accordingly, the operator $\mathrm{h}'_{\mathrm{O}}~(=\mathbf{I}^{\otimes3}-\mathrm{h}'_{\mathrm{E}})$ is defined. 

\noindent The party permutations in locally quantum theory is described by some global unitary operators which are allowed reversible operation in minimal tensor product composition since fully product states remain fully product under such operations. Notably, the states $\chi^{\mathrm{E}},~\chi^{\mathrm{O}}$ satisfy the following symmetry feature
\begin{align}
&U_{\mathrm{E}}\chi^{\mathrm{E}/\mathrm{O}}U^\dagger_{\mathrm{E}}=\chi^{\mathrm{E}/\mathrm{O}},\quad\quad U_{\mathrm{O}}\chi^{\mathrm{E}/\mathrm{O}}U^\dagger_{\mathrm{O}}=\chi^{\mathrm{O}/\mathrm{E}},
\end{align}    
where $U_{\mathrm{E}}$, $U_{\mathrm{O}}$ respectively denote unitaries corresponding to even and odd permutations. These symmetries further ensure to choose the operators $\mathrm{h}'_{\mathrm{E}}$ and $\mathrm{h}'_{\mathrm{O}}$ satisfying 
\begin{align}
&U_{\mathrm{E}}\mathrm{h}'_{\mathrm{E}/\mathrm{O}}U^\dagger_{\mathrm{E}}=\mathrm{h}'_{\mathrm{E}/\mathrm{O}},\quad\quad U_{\mathrm{O}}\mathrm{h}'_{\mathrm{E}/\mathrm{O}}U^\dagger_{\mathrm{O}}=\mathrm{h}'_{\mathrm{O}/\mathrm{E}},
\end{align}    
which substantially simplifies the operator's choice in Eq.(\ref{he}) by fixing 
\begin{subequations}
\begin{align}
&\alpha=1;~~\alpha^{(p)}_i=\beta^{(p)}_{ii}=0;\\
&\beta^{(p)}_{ij}=-\beta^{(p)}_{ji};~\beta^{(p)}_{ij}=\beta^{(q)}_{ij}~:~i\neq j~\&~k\neq l;\\
&\gamma_{iij}=\gamma_{iji}=\gamma_{jii}=\gamma_{iii}=0~:~i\neq j;\\
&\gamma_{ijk}=\gamma_{jkk}=\gamma_{kij}=-\gamma_{ikj}=-\gamma_{jik}=-\gamma_{kji}~:~\nonumber\\
&\hspace{3cm} i\neq j\neq k\neq i;
\end{align}
\end{subequations}
where all the indices take values $\{1,2,3\}$. Accordingly, the operator $\mathrm{h}_{\mathrm{E}}$ takes the form
\begin{align}
\mathrm{h}'_{\mathrm{E}}&=\tfrac{1}{2}\Big(\beta_{12}[[\sigma_1,\sigma_2,\mathbf{I}]]+\beta_{23}[[\sigma_2,\sigma_3,\mathbf{I}]]\nonumber\\
&\hspace{1cm}+\beta_{31}[[\sigma_3,\sigma_1,\mathbf{I}]]+\gamma[[\sigma_1,\sigma_2,\sigma_3]]+\mathbf{I}^{\otimes3}\Big),\label{hesim1}
\end{align}
where $[[u,v,w]]:=u\otimes v\otimes w+v\otimes w\otimes u+w\otimes u\otimes v-v\otimes u\otimes w-u\otimes w\otimes v-w\otimes v\otimes u$.

The particular choices of the states $\phi_1,~\phi_2,~\phi_3$ (see Fig.~\ref{fig1}) induces another interesting symmetry. Consider the unitary operation $U_{\mathrm{T}}$ that is $2\pi/3$ rotation about the axis joining origin and the vector $\hat{\mathrm{T}}:=\tfrac{1}{\sqrt{3}}(1,1,1)$. Now, we have
\begin{subequations}
\begin{align}
&U_{\mathrm{T}}\phi_1U^\dagger_{\mathrm{T}}=\phi_2,~~U_{\mathrm{T}}\phi_2U^\dagger_{\mathrm{T}}=\phi_3,~~U_{\mathrm{T}}\phi_3U^\dagger_{\mathrm{T}}=\phi_1;\\
&U_{\mathrm{T}}\sigma_1U^\dagger_{\mathrm{T}}=\sigma_2,~~~U_{\mathrm{T}}\sigma_2U^\dagger_{\mathrm{T}}=\sigma_3,~~~U_{\mathrm{T}}\sigma_3U^\dagger_{\mathrm{T}}=\sigma_1.
\end{align}
\end{subequations}
This further allows us to further simplify the operator in Eq.(\ref{hesim1}) by choosing $\beta_{12}=\beta_{23}=\beta_{31}:=\beta$. Which this choice the success probability of discrimination the states $\chi^{\mathrm{E}}$ and $\chi^{\mathrm{O}}$, consequently the success probability of $\mathbb{P}[3]$ tasks in minimal tensor product theory of three qubits, become 
\begin{align}
P_{succ}=\tfrac{1}{2}\Big(1+\tfrac{9}{2}\beta\Big).
\end{align}
Notably, the success probability is independent of the parameter $\gamma$. Now, choosing $\gamma=0$ and $\beta=2/9$, the operator in Eq.(\ref{hesim1}) becomes
\begin{align}
\mathrm{h}'_{\mathrm{E}}&=\tfrac{1}{2}\Big(\mathbf{I}^{\otimes3}+\tfrac{2}{9}\Big([[\sigma_1,\sigma_2,\mathbf{I}]]+[[\sigma_2,\sigma_3,\mathbf{I}]]\nonumber\\
&\hspace{4cm}+[[\sigma_3,\sigma_1,\mathbf{I}]]\Big)\Big),\label{hesim2}
\end{align}    
ensuring the perfect success. What we are left with to argue that the operator in Eq.(\ref{hesim2}) yields positive probability on three-qubit product states. For that, consider a generic product state of the form
\begin{align}
\rho_{prod}:=\tfrac{1}{2}(\mathbf{I}+\hat{n}.\sigma)\otimes\tfrac{1}{2}(\mathbf{I}+\hat{m}.\sigma)\otimes\tfrac{1}{2}(\mathbf{I}+\hat{r}.\sigma),
\end{align}
where $\hat{n},~\hat{m},~\hat{r}$ be the unit vectors in $\mathbb{R}^3$. A straightforward calculation yields
\begin{align*}
\tr[\rho_{prod}\mathrm{h}'_\mathrm{E}]=\tfrac{1}{2}\Big(1+\tfrac{2\sqrt{3}}{9}(\hat{n}\times\hat{m}+\hat{m}\times\hat{r}+\hat{r}\times\hat{n}).\hat{\mathrm{T}}\Big).
\end{align*}
Denoting $S=\hat{n}\times\hat{m}+\hat{m}\times\hat{s}+\hat{s}\times\hat{n}$, the maximum value of $|S|^2$, under the constraint that the Gram determinant of $\hat{n}.\hat{m},~\hat{m}.\hat{s},~\hat{s}.\hat{n}$ has to be positive, is upper bounded by $27/4$ i.e.
\begin{align}
|S|^2\le\tfrac{27}{4}\implies -\tfrac{3\sqrt{3}}{2}\le|S|\le \tfrac{3\sqrt{3}}{2}.
\end{align}
Thus we have,
\begin{align*}
\tr[\rho_{prod}\mathrm{h}'_\mathrm{E}]\ge\tfrac{1}{2}\Big(1-\tfrac{2\sqrt{3}}{9}\tfrac{3\sqrt{3}}{2}\Big)=0.
\end{align*}
This completes the proof.
\end{proof}
\noindent Note that, although Proposition~\ref{prop3} employs a product-state preparation, the corresponding measurement is not a product one; rather, it belongs to a class of effects that are more general than entangled quantum measurements. This naturally leads to the question: can the $\mathbb{P}[3]$ game be perfectly accomplished using three systems of operational dimension $2$, without invoking entanglement either at the preparation or at the measurement stage? In the following we establish that this is indeed the case. For that, we begin by recalling a class of GPTs known as polygon models \cite{Janotta2011}, which have been extensively investigated in the literature \cite{Massar2014,Brunner2014,Banik2015,Banik2019,Bhattacharya2020,Saha2020}.

\subsection{$\mathbb{P}[3]$ in Polygon Models}\label{appenC}
{\it State Space:} For a given \(n\ge4\), state space $\Omega_n$ is the convex hull of $n$ pure states $\{\omega_i\}_{i=1}^{n}$ having the shape of a regular polygon, where
\begin{align}
\omega_i=\begin{pmatrix}
r_n\cos\left(\frac{2 \pi i}{n}\right)\\
r_n\sin\left(\frac{2 \pi i}{n}\right)\\
1
\end{pmatrix}\in\mathbb{R}^3,\label{polystate}
\end{align}
where $r_n:=\sqrt{\sec(\pi/n)}$.

{\it Effect Space:} The corresponding effect space $\mathcal{E}_n$ is the convex hull of $\{\theta,u,e_i,\bar{e}_i:=u-e_i\}_{i=1}^{n}$, with $\theta=(0,0,0)^{\mathrm{T}}$ and $u=(0,0,1)^{\mathrm{T}}$, and $e_i$'s are given by,
\begin{subequations}
\begin{align}
e_i&=\frac{1}{2}\begin{pmatrix}
r_n\cos\Big(\frac{(2i-1)\pi}{n}\Big)\\
r_n\sin\Big(\frac{(2i-1)\pi}{n}\Big)\\
1
\end{pmatrix}~~~\text{(for even)};\\
e_i&=\frac{1}{1+r^2_n}\begin{pmatrix}
r_n\cos\Big(\frac{2\pi i}{n}\Big)\\
r_n\sin\Big(\frac{2\pi i}{n}\Big)\\
1
\end{pmatrix}~~~\text{(for odd)}.
\end{align}    
\end{subequations}
Notably, for even $n$, $\bar{e}_i= e_{\big(i+\tfrac{n}{2}\big)\text{mod}~n}$.

{\it Reversible Transformations:} The set of the reversible transformations, $\mathcal{T}_n$, is the dihedral group of order $2n$ containing $n$ reflections and $n$ rotations,
\begin{align}
\mathcal{T}_n\equiv&\left\{T_k^p~|~k=0,\cdots,n-1;~\&~p\in\{+,-\} \right\},\nonumber\\
T_k^p&:= 
\begin{pmatrix}
\cos(\frac{2\pi k}{n}) & -p\sin(\frac{2\pi k}{n}) & 0 \\
\sin(\frac{2\pi k}{n}) & ~~p\cos(\frac{2\pi k}{n}) & 0 \\
0  & 0  & 1   
\end{pmatrix},
\end{align}
with $p=+$ corresponding to rotations and $p=-$ to  reflections.

\begin{proposition}\label{prop4}
The permutation parity task $\mathbb{P}[3]$ can be accomplished with unit success probability in the hexagon model, without invoking entanglement either in the preparation stage or in the measurement stage.
\end{proposition}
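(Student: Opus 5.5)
The approach is to build the measurement explicitly as a \emph{separable} effect: a classical wiring (sequential, adaptive local measurements with classical post-processing) of local hexagon effects $e_j,\bar e_j$. Its two outcomes then have the form $\mathrm{h}_{\mathrm E}=\sum_k f^{(1)}_k\otimes f^{(2)}_k\otimes f^{(3)}_k$ with $f^{(i)}_k\in\mathcal{E}_6$, and $\mathrm{h}_{\mathrm O}=u^{\otimes 3}-\mathrm{h}_{\mathrm E}$ of the same form. Such effects are valid in any composition (minimal or maximal), and no entanglement is used at the measurement stage. The preparation is the product state $\omega_1\otimes\omega_3\otimes\omega_5$ of Fig.~\ref{fig1}, which is trivially unentangled. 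Its six permuted versions $\omega_{\pi(1)}\otimes\omega_{\pi(3)}\otimes\omega_{\pi(5)}$ split into the even orbit $\{(1,3,5),(3,5,1),(5,1,3)\}$ and the odd orbit $\{(1,5,3),(3,1,5),(5,3,1)\}$. Perfect success means that $\mathrm{h}_{\mathrm E}$ evaluates to $1$ on the even orbit and to $0$ on the odd orbit.

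First I would tabulate the local response function. For $n=6$ we have $r_6^2=2/\sqrt3$, so $e_j(\omega_i)=\tfrac12\big(1+\tfrac{2}{\sqrt3}\cos\Delta\big)$, where $\Delta$ is the angle between $\omega_i$ (at $60^\circ,180^\circ,300^\circ$) and $e_j$ (at $30^\circ,90^\circ,\dots,330^\circ$). This evaluates to $1$, $\tfrac12$, or $0$ for $\Delta=30^\circ$, $90^\circ$, $150^\circ$ respectively. Consequently every pair among $\omega_1,\omega_3,\omega_5$ is perfectly distinguished by a single two-outcome measurement $\{e_j,\bar e_j\}$. For example, $e$ at $90^\circ$ separates $\omega_1$ from $\omega_5$, $e$ at $210^\circ$ separates $\omega_3$ from $\omega_1$, and $e$ at $330^\circ$ separates $\omega_5$ from $\omega_3$. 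In each case the third state gives probability $\tfrac12$. The triple itself is not jointly distinguishable, consistent with operational dimension $2$.

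The core step is to combine these pairwise tests into a decision tree over the three systems whose leaves are labeled E or O. Every branch must be deterministic on all six permuted inputs. Equivalently, the $\tfrac12$ outcomes that appear when a system holds the ``third'' state must occur only where both subsequent sub-branches lead to the same parity. I would begin with system $1$, measured using the $90^\circ$ effect. Conditioned on the outcome, the candidate set shrinks to four permutations, and I would choose the measurement on system $2$ (and then on system $3$) from the six local dichotomies so that each remaining pair of opposite-parity candidates is separated deterministically. The key leverage is that knowing the states at two positions fixes the permutation.

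Alternatively, I would take a non-adaptive separable ansatz $\mathrm{h}_{\mathrm E}=\sum_k c_k\, f^{(1)}_k f^{(2)}_k f^{(3)}_k$ with $c_k\ge0$, symmetrized under the cyclic relabelling $\omega_1\to\omega_3\to\omega_5$ (a $120^\circ$ rotation $T^+_2$, mirroring the $U_{\mathrm T}$ trick in Proposition~\ref{prop3}). I would then solve the small linear feasibility problem given by the six equalities together with the requirement that $u^{\otimes3}-\mathrm{h}_{\mathrm E}$ is also a nonnegative combination of product effects.

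I expect this combinatorial step to be the main obstacle. The naive greedy choices I tried leave some leaf with a residual $\tfrac12$ ambiguity, so the tree (or the LP) has to exploit the fact that the $\tfrac12$ outcome of one state can be absorbed by a later measurement whose result is parity-consistent either way. Once a valid decision tree or LP solution is found, verification is mechanical. One checks the $6\times 2$ table of probabilities, and notes that both outcome effects are sums of products of local effects and hence separable. This establishes Proposition~\ref{prop4}.
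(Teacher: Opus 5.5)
Your setup is correct: the preparation $\omega_1\otimes\omega_3\otimes\omega_5$, the response values $1,\tfrac12,0$, and the plan to use a measurement whose outcomes are nonnegative combinations of product effects. The gap is that no measurement is ever exhibited, and that construction is the entire proof. Moreover, the route you lead with cannot work.

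No nonzero hexagon effect vanishes on two of $\omega_1,\omega_3,\omega_5$. Effects vanishing on $\omega_3$ lie in the cone of $e_6,e_1$, those vanishing on $\omega_5$ lie in the cone of $e_2,e_3$, and these cones meet only in $0$. Hence each local outcome rules out at most one state, and a single system can never separate one of the three states from the other two. Label each input by the indices of the states on systems $1,2,3$. Suppose the first outcome on system $1$ leaves $\{(1,3,5),(3,5,1),(1,5,3),(3,1,5)\}$. Then every outcome on system $2$ or $3$ does one of two things:
\begin{itemize}
\item it leaves two opposite-parity inputs carrying the same state on the last system, e.g.\ $(1,3,5)$ and $(3,1,5)$; or
\item it forces the last system to separate one state from two.
\end{itemize}
A first outcome that rules out nothing fails in the same way, and the remaining cases follow from the cyclic symmetries. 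So the residual $\tfrac12$ ambiguity you ran into is unavoidable for any adaptive protocol that measures each system once. This is the nonlocality-without-entanglement feature the paper points out, and it is why the paper turns to cube theory for a locally implementable measurement.

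Your second route, a non-adaptive measurement built from product effects, is exactly the paper's route and is feasible, but you must supply the solution. The paper uses twelve product effects:
\begin{itemize}
\item $\tfrac89\,e_2\otimes e_4\otimes e_6$ and its two cyclic shifts, answering E;
\item $\tfrac89\,e_1\otimes e_5\otimes e_3$ and its two cyclic shifts, answering O;
\item $\tfrac49\,e_i\otimes e_i\otimes e_i$ for $i=1,\dots,6$.
\end{itemize}
The last six vanish on every input, because each $e_i$ is zero on one of $\omega_1,\omega_3,\omega_5$. For the parity effects, $e_2\otimes e_4\otimes e_6$ gives $1$ on $(1,3,5)$, $\tfrac18$ on $(3,5,1)$, and $0$ on the other four inputs; the other five behave analogously. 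Each even input therefore receives total weight $\tfrac89+\tfrac19=1$ on the E-effects and $0$ on the O-effects, and symmetrically for odd inputs.

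The one nontrivial check is that the twelve effects sum to $u^{\otimes3}$. Write $e_j=\tfrac12(\vec a_j,1)$. The linear terms cancel because the angles are equally spaced, and the cubic terms cancel because the E and O triples have opposite third harmonics. The quadratic terms from the parity effects contribute $-\tfrac{r_6^2}{6}$ per pair of slots, and this is exactly cancelled by $+\tfrac{r_6^2}{6}$ from the $e_i^{\otimes3}$ terms. In particular, the parity effects alone cannot be rescaled into a complete measurement, so your LP must include such dummy product outcomes that vanish on every input.
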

\begin{proof}
Let three hexagon-bits A,B,C be prepared in the product state $\omega_{1}\otimes\omega_{3}\otimes\omega_{5}$ (see Fig.~\ref{fig1}). The corresponding even and odd permutation states are
\begin{align*}
\eta^{\mathrm{E}}_1&:=\omega_1\otimes\omega_3\otimes\omega_5,\quad &\eta^{\mathrm{O}}_1&:=\omega_1\otimes\omega_5\otimes\omega_3,\\
\eta^{\mathrm{E}}_2&:=\omega_5\otimes\omega_1\otimes\omega_3,\quad &\eta^{\mathrm{O}}_2&:=\omega_3\otimes\omega_1\otimes\omega_5,\\
\eta^{\mathrm{E}}_3&:=\omega_3\otimes\omega_5\otimes\omega_1,\quad &\eta^{\mathrm{O}}_3&:=\omega_5\otimes\omega_3\otimes\omega_1.
\end{align*}
Consider the following set of product effects
\begin{align*}
E^1_{\mathrm{E}}&:=\tfrac{8}{9}~ e_2 \otimes e_4 \otimes e_6,\quad\quad E^1_{\mathrm{O}}:=\tfrac{8}{9}~ e_1 \otimes e_5 \otimes e_3;\\
E^2_{\mathrm{E}}&:=\tfrac{8}{9}~ e_4 \otimes e_6 \otimes e_2,\quad\quad E^2_{\mathrm{O}}:=\tfrac{8}{9}~ e_3 \otimes e_1 \otimes e_5,\\
E^3_{\mathrm{E}}&:=\tfrac{8}{9}~ e_6 \otimes e_2 \otimes e_4,\quad\quad E^3_{\mathrm{O}}:=\tfrac{8}{9}~ e_5 \otimes e_3 \otimes e_1,\\
&~~~~E^i_{\times}:=\tfrac{4}{9}~e_i\otimes e_i\otimes e_i~;\quad~i\in\{1,\cdots,6\}.
\end{align*}
It turns out that $\sum_{i=1}^6E^i_{\times}+\sum_{j=1}^3(E^j_{\mathrm{E}}+E^j_{\mathrm{O}})=u^{\otimes 3}$, thereby forming a 3-hexagonbit non-entangling measurement. Furthermore, $E^i_{\times}(\eta^{\mathrm{E}/\mathrm{O}}_j)=0,~\forall~i\in\{1,\cdots,6\}~\&~j\in\{1,2,3\}$; and $E^i_a(\eta^b_j)=\kappa_{ij}\delta_{ab}$ with $\kappa_{ij}\ge0,~\forall~i,j\in\{1,2,3\}$. Thus this measurement perfectly distinguishes the sets of odd parity states vs set of even parity states, thereby perfectly achieving the $\mathbb{P}[3]$ task. This completes the proof.
\end{proof}

\subsection{$\mathbb{P}[3]$ in Cube Theory}
While Proposition \ref{prop4} establishes that the $\mathbb{P}[3]$ task can be perfectly accomplished in hexagon theory without invoking entanglement in either preparation or measurement, the nature of the measurement warrants a closer examination. Although all of its effects are fully product, the measurement nonetheless requires the three hexagon bits to be addressed jointly; it cannot be realized by measuring each subsystem independently. In quantum theory, this peculiar feature is well known as nonlocality without entanglement \cite{Bennett1999}. More recently, analogous phenomena have been identified in the broader setting of GPTs \cite{Bhattacharya2020}. This naturally raises the question: can the $\mathbb{P}[3]$ task be achieved using GPT systems with outcome dimension $2$, while using product preparation and employing a measurement implementable through purely local addressing of each subsystem? Our next result answers this question affirmatively.
\begin{figure}[t!]
\centering
\includegraphics[width=1\linewidth]{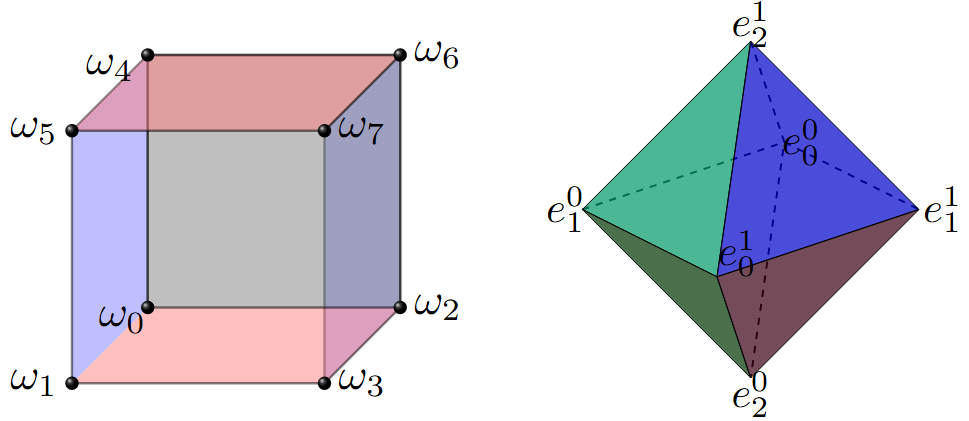}
\caption{Cube theory: the normalized state space  $\Omega_{\text{cu}}\equiv\mathrm{Conv.Hull}\{\omega_k
\}_{k=0}^7$. It has the six ray extreme effects $\{e^j_i\}_{i,j=0}^{2,1}$.  The state and effect cones live in $\mathbb{R}^4$.}\label{fig2}
\vspace{-.2cm}
\end{figure}

\begin{proposition}\label{prop5}
The permutation parity task $\mathbb{P}[3]$ can be accomplished perfectly in cube theory with product preparation and locally implementable product measurement.  
\end{proposition}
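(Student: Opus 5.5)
The plan is to exploit the fact that every vertex of the cube state space $\Omega_{\text{cu}}$ gives a deterministic outcome for each of the three facet measurements $\{e^0_i,e^1_i\}$, $i\in\{0,1,2\}$. A single cube-bit therefore stores one classical bit along each of three independent axes, although only one axis can be read per measurement, so its operational dimension stays $2$. Its linear dimension is $4\ge 3$, and the task should exploit exactly this surplus. First I will fix coordinates. I label the vertices $\omega_k$ by sign vectors $s\in\{\pm1\}^3$ and normalise the facet effects so that $e^{0}_i(\omega_s)=(1+s_i)/2$ and $e^{1}_i(\omega_s)=(1-s_i)/2$. Then $e^0_i+e^1_i=u$, and every facet effect takes the value $0$ or $1$ on every vertex. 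This is a routine check from the explicit cube representation in Fig.~\ref{fig2}.

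Next I choose the product preparation. Particle $a\in\{1,2,3\}$ is prepared in the vertex $\omega_{s^{(a)}}$ with $s^{(a)}_k=+1$ if $k=a$ and $s^{(a)}_k=-1$ otherwise; that is, $s^{(1)}=(+,-,-)$, $s^{(2)}=(-,+,-)$ and $s^{(3)}=(-,-,+)$. The measurement is locally implementable and fixed in advance: the system sitting at position $k$ is measured with the local facet measurement $\{e^0_{k-1},e^1_{k-1}\}$ along axis $k$. The global measurement is then the product measurement with effects $e^{x_1}_0\otimes e^{x_2}_1\otimes e^{x_3}_2$, which involves neither entanglement nor joint addressing. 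After a hidden permutation $\pi$, position $k$ holds particle $\pi^{-1}(k)$. Its outcome is therefore deterministic and equals $+$ exactly when $\pi^{-1}(k)=k$, i.e. exactly when $k$ is a fixed point of $\pi$.

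The decision rule then follows from counting fixed points. The number of $+$ outcomes equals the number of fixed points of $\pi$. For $S_3$ this is $3$ for the identity and $0$ for the two $3$-cycles, so even permutations give $3$ or $0$. Each of the three transpositions has exactly one fixed point, so odd permutations always give $1$. The referee therefore declares ``even'' if the count lies in $\{0,3\}$ and ``odd'' if it equals $1$. Coarse-graining the eight product effects accordingly yields a two-outcome measurement $\{E_{\mathrm E},E_{\mathrm O}\}$ with $E_{\mathrm E}+E_{\mathrm O}=u^{\otimes3}$, which satisfies $E_a(\eta^b)=\delta_{ab}$ for all six permuted product states. This gives unit success probability.

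The only step needing care is the first one: confirming, in the paper's specific $\mathbb{R}^4$ representation of cube theory, that the chosen vertices are indeed eigen-vertices of the facet effects with values exactly $0/1$. This amounts to reading off the vertex and facet coordinates. I would also remark that the same construction fails classically with two-level labels, and that it works here because each cube-bit carries three readable bits. This is consistent with the paper's thesis that linear dimension, not entanglement, is the relevant resource.
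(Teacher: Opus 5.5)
Your proof is correct and is essentially the paper's argument. The paper prepares the one-hot product state $\omega_{001}\otimes\omega_{010}\otimes\omega_{100}$, which is your choice up to the harmless relabeling $0\leftrightarrow1$ of facet outcomes, measures $\mathrm{M}_0,\mathrm{M}_1,\mathrm{M}_2$ on the first, second and third systems, and enumerates the outcomes: even permutations give $(1,1,1)$ or $(0,0,0)$, while odd ones give exactly one outcome $1$. Your fixed-point count is simply a compact reformulation of that case-by-case enumeration.
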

\begin{proof}
The normalized state space of cube theory is the convex hull of the eight pure states, i.e. $\Omega_{\text{cu}}\equiv\mathrm{Conv.Hull}\{\omega_k
\}_{k=0}^7$; and the normalized effect space is $\mathcal{E}_{\text{cu}}\equiv\mathrm{ConvHull}\{u, \Theta,e^j_i~|~i=0,1,2;~j=0,1\}$ (see Fig.~\ref{fig2}) of which $\{e^j_i\}$'s are the ray-extremal effect\footnote{For a cone $\mathrm{C}\subseteq\mathbb{R}^n$, a nonzero vector $\vec{r}\in\mathrm{C}$ is called a ray extremal element if there do not exists linearly independent vectors $\vec{r}_1,\vec{r}_2\in\mathrm{C}$ and positive scalars $\lambda_1,\lambda_2$ such that $\vec{r}=\lambda_1\vec{r}_1+\lambda_2\vec{r}_2$.}. The theory allows only three extremal measurements $\mathrm{M}_{i}\equiv\{e^0_i,e^1_i\}$ for $i=0,1,2$, implying outcome dimension of this model to be $2$. Denoting $\omega_k\equiv\omega_{x_2x_1x_0}$ with $k:=\sum_{i=0}^2x_i\times2^i$ and $x_0,x_1,x_2\in\{0,1\}$, it is immediate to see which ray-extreme effects click perfectly on which extreme states. In particular, the state $\omega_{x_2x_1x_0}$ gets clicked perfectly on the set of effects $\{e^{x_i}_i\}_{i=0}^2$. 

To accomplish the $\mathbb{P}[3]$ task is cube theory, consider three cube-bit systems $A, B, C$ prepared in the product state $\omega_{001}\otimes\omega_{010}\otimes\omega_{100}$. Under different permutations the state becomes 
\begin{align*}
\mathcal{C}^{\mathrm{E}}_1&:=\omega_{001}\otimes\omega_{010}\otimes\omega_{100},\quad &\mathcal{C}^{\mathrm{O}}_1&:=\omega_{001}\otimes\omega_{100}\otimes\omega_{010},\\
\mathcal{C}^{\mathrm{E}}_2&:=\omega_{010}\otimes\omega_{100}\otimes\omega_{001},\quad &\mathcal{C}^{\mathrm{O}}_2&:=\omega_{010}\otimes\omega_{001}\otimes\omega_{100},\\
\mathcal{C}^{\mathrm{E}}_3&:=\omega_{100}\otimes\omega_{001}\otimes\omega_{010},\quad &\mathcal{C}^{\mathrm{O}}_3&:=\omega_{100}\otimes\omega_{010}\otimes\omega_{001}.
\end{align*}
To identify the parity of the permutation, we perform the measurements $\mathrm{M}_0,~\mathrm{M}_1$, and $\mathrm{M}_2$ on first, second, and the third cube-bit systems, respectively.
On the state $\mathcal{C}^{\mathrm{E}}_1$ the measurements yield outcomes $(1,1,1)$. Similarly, on both the states $\mathcal{C}^{\mathrm{E}}_2$ and $\mathcal{C}^{\mathrm{E}}_3$ the outcomes are $(0,0,0)$. On the other hand, on states $\mathcal{C}^{\mathrm{O}}_1$, $\mathcal{C}^{\mathrm{O}}_2$, and $\mathcal{C}^{\mathrm{O}}_3$ the outcomes are respectively $(1,0,0),~(0,0,1),~(0,1,0)$. From this product measurement outcomes we thus perfectly distinguish the parity of the applied permutation, thereby completing the proof.  
\end{proof}
\section{Playing the $\mathbb{P}[4]$ game in Quantum Theory}
\noindent We now revisit the parity game in the four-party scenario. If one attempts to implement this game using a two-dimensional classical system, the success probability is upper bounded by $\tfrac{1}{2}$. In this section, we present a realization based on qubit product state preparation that surpasses this classical bound, achieving a success probability greater than $\tfrac{1}{2}$.

\noindent\textbf{Four qubit preparation}: Consider that four tetrahedron states
\begin{align}
\left.\begin{aligned}
\ket{\psi_1}& = \ket{0},\hspace{0.5 cm} \ket{\psi_2} = \frac{1}{\sqrt{3}}(\ket{0}+ \sqrt{2}\ket{1}),\\
\ket{\psi_3}&=\frac{1}{\sqrt{3}}(\ket{0}+ \sqrt{2} e^{\frac{2 \pi i}{3}}\ket{1}),\\
\ket{\psi_4}&=\frac{1}{\sqrt{3}}(\ket{0}+ \sqrt{2}e^{\frac{4 \pi i}{3}}\ket{1})\\
\end{aligned}\right.
\end{align}
Let $\ket{\psi} = \ket{\psi_1}\otimes\ket{\psi_2}\otimes\ket{\psi_3}\otimes\ket{\psi_4}$ and 
\begin{align}
\chi^{\mathrm{E}}:=\frac{1}{12}\sum_{\pi\in E_4}\pi(\ket{\psi}\bra{\psi}),~~\chi^{\mathrm{O}}:=\frac{1}{12}\sum_{\pi\in O_4} \pi(\ket{\psi}\bra{\psi}).    
\end{align}
The parity problem $\mathbb{P}[4]$ is now just a distinguishability problem of the set $\{\chi^{\mathrm{E}},\chi^{\mathrm{O}}\}$, where the optimal distinguishability comes from the Helstrom measurement. Using Helstrom measurement, the optimal succes of distinguishing the two states is
\begin{align}
\mathcal{P}^{[4]}_{\psi^\star_{p}}=\frac{1}{2}\left(1+\frac{2}{9}\right)=\frac{11}{18}> \frac{1}{2}.
\end{align}

\noindent As mentioned in Section II, a $n$-party parity game can be won perfectly using a$\lceil \sqrt{n} \rceil$-dimensional quantum system. Therefore, for $n > 4$, the game cannot be won perfectly using qubits alone. On the other hand, if one uses a two-dimensional classical system, the success probability is at most $\tfrac{1}{2}$. This naturally raises the question: if we play the game with $n > 4$ using qubits, can we obtain any advantage over the classical case? We address this question in the following theorem.

\section{$\mathbb{P}[n]$ game and the Linear Dimension}

\noindent While so far we have mostly analyzed the $\mathbb{P}[3]$ task, in this section we establish a generic result for having nontrivial success in $\mathbb{P}[n]$ in an arbitrary GPT.  

\begin{theorem}\label{theo1}
Let $\mathcal{S}$ be a GPT system with $\mathtt{Ld}_{\mathcal{S}}$.The optimal success probability in the parity game $\mathbb{P}[n]$ is strictly larger than $\frac{1}{2}$ if and only if $\mathtt{Ld}_{\mathcal{S}}\ge n$.  
\end{theorem}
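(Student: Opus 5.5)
The plan is to reduce the theorem to a statement about the antisymmetric part of the $n$-fold tensor power of the vector space carrying $\mathcal{S}$. Let $V\cong\mathbb{R}^{d}$ with $d=\mathtt{Ld}_{\mathcal{S}}$ be the span of the unnormalized states. Any composition obeying NS and LT lies between $\otimes_{\min}$ and $\otimes_{\max}$, so the composite states of $n$ copies live in $V^{\otimes n}$. There the permutation $\pi\in S_n$ acts linearly by permuting tensor factors. For a prepared state $\omega$ define, as in the quantum case,
\[
\chi^{\mathrm{E}}=\frac{2}{n!}\sum_{\pi\in E_n}\pi(\omega),\qquad \chi^{\mathrm{O}}=\frac{2}{n!}\sum_{\pi\in O_n}\pi(\omega),
\]
so that
\[
\chi^{\mathrm{E}}-\chi^{\mathrm{O}}=\frac{2}{n!}\,\mathcal{A}(\omega),\qquad \mathcal{A}:=\sum_{\pi\in S_n}\mathrm{sgn}(\pi)\,\pi .
\]
The parity task is then the discrimination of $\chi^{\mathrm{E}}$ from $\chi^{\mathrm{O}}$ with equal priors. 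Its optimal success exceeds $\tfrac12$ iff some admissible effect $e$ gives $e(\chi^{\mathrm{E}})\neq e(\chi^{\mathrm{O}})$: if no effect separates the two states, every measurement yields identical statistics and success is exactly $\tfrac12$. Conversely, guessing E on outcome $e$ (or on its complement $u-e$) gives success $\tfrac12+\tfrac12|e(\chi^{\mathrm{E}})-e(\chi^{\mathrm{O}})|$.

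\emph{Necessity} ($\mathtt{Ld}_{\mathcal{S}}<n \Rightarrow$ success $=\tfrac12$). The image of $\mathcal{A}$ on $V^{\otimes n}$ is, up to normalization, the $n$-th exterior power $\Lambda^n V$, which vanishes when $d<n$. Concretely, expand any vector of $V^{\otimes n}$ in a basis $\{v_1,\dots,v_d\}$. Each basic tensor $v_{i_1}\otimes\cdots\otimes v_{i_n}$ has a repeated index, hence is fixed by some transposition $\tau$. Pairing $\pi$ with $\pi\tau$ shows $\mathcal{A}$ annihilates it. Thus $\chi^{\mathrm{E}}=\chi^{\mathrm{O}}$ for every composite state, entangled or not, in any composition. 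This gives the claim that no amount of entanglement helps below threshold.

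\emph{Sufficiency} ($\mathtt{Ld}_{\mathcal{S}}\ge n \Rightarrow$ success $>\tfrac12$). Since the normalized states span $V$, I would pick $n$ linearly independent states $\omega_1,\dots,\omega_n\in\Omega_{\mathcal{S}}$ and prepare the product state $\omega_1\otimes\cdots\otimes\omega_n$, which is admissible in every composition. Then $\mathcal{A}(\omega)=\omega_1\wedge\cdots\wedge\omega_n\neq0$. To exhibit a separating effect without assuming anything about the global effect space, I would use only local measurements. The local effects span the dual space $V^*$: they contain $u$ together with small multiples of every functional shifted by $u$, since $\Omega_{\mathcal{S}}$ is compact. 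Hence product effects $e_1\otimes\cdots\otimes e_n$ span $(V^{\otimes n})^*$. Each such product effect is an outcome of a locally implementable measurement $\{e_i,u-e_i\}$ on each system, and is valid in every composition. If all of them gave equal values on $\chi^{\mathrm{E}}$ and $\chi^{\mathrm{O}}$, the nonzero vector $\mathcal{A}(\omega)$ would be annihilated by a spanning set of functionals, a contradiction. So some local measurement has an outcome string with $p_{\mathrm{E}}\neq p_{\mathrm{O}}$, and postprocessing on that string gives success strictly above $\tfrac12$. One can even be explicit: take $e_i$ proportional to a shifted dual-basis functional to $\omega_i$, so that $e_1\otimes\cdots\otimes e_n$ picks out the identity term.

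The main obstacle I anticipate is making the ``composition-independent'' claims airtight. First, the permutations must act as the linear swap on $V^{\otimes n}$, which relies on local tomography. Second, local effects must genuinely span $V^*$, which fails without compactness or normalization assumptions. I would state both as standing assumptions of the GPT framework recalled above. Everything else reduces to the standard fact $\Lambda^nV=0$ iff $\dim V<n$. As a consistency check, a qubit has $\mathtt{Ld}=4$, which explains the advantage for $n=3,4$ found earlier and rules it out for $n\ge5$ with qubits.
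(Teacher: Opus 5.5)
Your proof is correct and essentially matches the paper's. The repeated-index/transposition pairing forces $\chi^{\mathrm{E}}=\chi^{\mathrm{O}}$ for every composite state when $\mathtt{Ld}_{\mathcal{S}}<n$, and a product of $n$ linearly independent states gives $\chi^{\mathrm{E}}\neq\chi^{\mathrm{O}}$ when $\mathtt{Ld}_{\mathcal{S}}\ge n$ (you phrase this as $\omega_1\wedge\cdots\wedge\omega_n\neq0$, the paper as linear independence of the permuted orbit).

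Your observation that local product effects span $(V^{\otimes n})^*$ usefully makes explicit a step the paper only asserts, namely that $\rho_e\neq\rho_o$ yields success above $\tfrac12$. It also shows that a locally implementable measurement already suffices. There is one small slip in your optional explicit aside: with $e_i=\epsilon(f_i+c\,u)$ and $f_i(\omega_j)=\delta_{ij}$, the product effect evaluates on $\mathcal{A}(\omega)$ to $\det[e_i(\omega_j)]=\epsilon^n(1+nc)$, not to the identity term alone. This value is still nonzero, so the conclusion stands.
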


\begin{proof}
Let $\omega \in \Omega_{\mathcal{S}^{\otimes n}}$, where $\otimes$ denotes any composition lying 
between the minimal and maximal tensor products. The symmetric group $S_n$ acts naturally on the tensor product space $\Omega_{\mathcal{S}^{\otimes n}}$ by permuting the subsystems. Let $E_n \subset S_n$ and $O_n \subset S_n$ denote the sets of even and odd 
permutations, respectively. The problem of determining parity of a permutation reduces to distinguishing the states
\begin{align*}
    \rho_e=\frac{2}{n!}\sum_{\pi\in E_n}\pi(\omega)
    \qquad \text{and} \qquad
    \rho_o=\frac{2}{n!}\sum_{\pi\in O_n}\pi(\omega).
\end{align*}
where $\pi(\omega)$ denotes the state obtained by permuting the $n$ subsystems according to $\pi$.

If $\rho_e=\rho_o$, the optimal success probability is $\frac{1}{2}$. We first show that this occurs whenever $n>\mathtt{Ld}_{\mathcal{S}}$. 

Let $\{v_i\}_{i=1}^{\mathtt{Ld}_{\mathcal{S}}}$ be a basis of the vector space $\mathbb{R}^{\mathtt{Ld}_{\mathcal{S}}}$. Any state $\omega \in \Omega_{\mathcal{S}^{\otimes n}}$ can be written as,
\begin{align*}
    \omega = \sum_{i_1,\dots,i_n=1}^{\mathtt{Ld}_{\mathcal{S}}} a_{i_1i_2\cdots i_n} v_{i_1}\otimes v_{i_2}\otimes\cdots v_{i_n}
\end{align*}
where $a_{i_1i_2\cdots i_n} \in \mathbb{R} ~~\forall~ i_j \in [\mathtt{Ld}_{\mathcal{S}}]$, $j\in [n] $.\\

Since $n> \mathtt{Ld}_{\mathcal{S}}$, the Pigeonhole principle implies that in every tensor product term there exist indices $k \neq l$ such that $i_k = i_l$. Consequently, exchanging the $k-$th and $l-$th subsystems leaves that term invariant. Hence each tensor product term is invariant under a transposition of two subsystems. Since composing any permutation with such a transposition flips its parity without changing its action on the tensor term, the contributions of even and odd permutations coincide. Therefore,
\begin{align}
    \rho_e=\rho_o\qquad \forall~ n > \mathtt{Ld}_{\mathcal{S}}
\end{align}
Hence the optimal success probability of the parity game equals $\frac{1}{2}$ in this case. \\

We now prove the converse. Suppose $n \leq \mathtt{Ld}_{\mathcal{S}}$.  Let $\mathrm{B}=\{v_i\}_{i=1}^{\mathtt{Ld}_{\mathcal{S}}}\subset \Omega_{\mathcal{S}}$ be a set of $\mathtt{Ld}_{\mathcal{S}}$ linearly independent states and consider the product state 
\begin{align}
   \omega = v_{i_1}\otimes v_{i_2}\otimes\cdots v_{i_n}\in \Omega_{\mathcal{S}^{\otimes n}}
\end{align}
with all indices $i_1,\cdots,i_n$ distinct. Since the vectors $\{v_i\}$ are linearly independent and $n \leq \mathtt{Ld}_{\mathcal{S}}$, the set of permuted states
\begin{align*}
    S_n(\omega)=\{\pi(\omega)|\pi\in S_n\}
\end{align*}is linearly independent \cite{}. Consequently, the subsets
\begin{align*}
    E_n(\omega)=\{\pi(\omega)|\pi\in E_n\}, \qquad O_n(\omega)=\{\pi(\omega)|\pi\in O_n\}
\end{align*}
span linearly independent subspaces.

By construction $\rho_e$ and $\rho_o$ are supported on spans of $E_n(\omega)$ and $O_n(\omega)$ respectively. Since these are linearly independent, the states $\rho_e$ and $\rho_o$ are distinct. Therefore, there exists a protocol such that the two states can be distinguished with probability strictly greater than $\frac{1}{2}$. This completes the proof.
\end{proof}

Theorem~\ref{theo1} establishes that a GPT exhibits a nontrivial advantage in the parity-permutation problem if and only if its linear dimension satisfies $n \leq \mathtt{Ld}_{\mathcal S}$. The proof is constructive, providing explicit product-state strategies that achieve a success probability strictly greater than $1/2$ whenever this condition is satisfied. Thus, the existence of a nontrivial advantage is completely characterized by the linear dimension of the underlying state space. Specializing to quantum theory, where $\mathtt{Ld}_{\mathcal S}=d^2$, with $d$ denoting the local Hilbert-space dimension, Theorem~\ref{theo1} implies that a quantum advantage is possible if and only if $n \leq d^2$.

Consequently, for systems composed of $n \geq 5$ qubits ($d=2$, $\mathtt{Ld}_{\mathcal S}=4$), no strategy can achieve a success probability exceeding random guessing, regardless of the amount of entanglement present in the preparation or measurement. 

This completely characterizes the threshold for the emergence of nontrivial parity information within the GPT framework.

\section{Discussion}
In \cite{Diebra2025}, the authors studied the permutation parity problem in the regime of perfect success using the minimum number of systems. They showed that, by employing entangled states, the parity of a permutation can be identified with certainty whenever $d \leq \lceil \sqrt{n} \rceil$, where (d) denotes the Hilbert space dimension of the local systems. Furthermore, they quantified the minimum amount of entanglement required to win the task with certainty.

In contrast, perfect success is not necessary for demonstrating a quantum advantage. Any success probability exceeding the optimal classical value already establishes an operational advantage of using non-classical systems. Motivated by this observation, we investigated the role of entanglement and the underlying state-space structure in the permutation parity problem $\mathbb{P}[n]$.

Our results show that the onset of an advantage over classical strategies is governed not by entanglement, but by the linear dimension of the underlying operational theory. Specifically, we prove that a generalized probabilistic theory (GPT) allows better identification of permutation parity than classical theory if and only if its linear dimension satisfies $n \leq \mathtt{Ld}_{\mathcal S}$. Thus, the existence of a nontrivial advantage is completely characterized by the linear dimension of the state space. Moreover, such an advantage can always be achieved using appropriately chosen product-state preparations, demonstrating that entanglement at the preparation stage is not necessary for outperforming classical strategies.

Within quantum theory, our results imply that the parity problem exhibits a clear separation between the resources required for obtaining an advantage and those required for perfect success. While entanglement can enable perfect parity identification, a quantum advantage already arises from suitably chosen product-state preparations. In particular, the proof of Theorem~\ref{theo1} provides a general construction yielding nontrivial advantage for all $n \leq \mathtt{Ld}_{\mathcal S}$.

To illustrate these findings explicitly, we constructed product-state qubit strategies for the cases $n=3$ and $n=4$, both of which surpass the corresponding classical success probabilities. These examples demonstrate that quantum advantage in the permutation parity problem can emerge solely from the larger linear dimension of the quantum state space, without requiring entangled preparations.

Our results therefore complement those of \cite{Diebra2025}. Whereas the latter identifies conditions for perfect parity discrimination in quantum theory, our work characterizes the threshold for the existence of nontrivial parity information in arbitrary GPTs. From this perspective, the permutation parity game serves as a dimension witness not only for quantum theory but for generalized probabilistic theories more broadly: the threshold $n \leq \mathtt{Ld}_{\mathcal S}$ precisely determines when parity becomes operationally detectable beyond the classical limit. 

Beyond their foundational significance, our results may also be relevant from an experimental perspective.Since product-state preparations are generally easier to realize and maintain than highly entangled states, the strategies presented here may provide a more accessible route for demonstrating quantum advantage in permutation parity tasks. This suggests that nontrivial quantum advantages in such games may be observed in near-term implementations without requiring the preparation of large entangled resources.\bigskip

\noindent{\bf Acknowledgement}: JK acknowledges support from University Grants Commission, India (Reference no. 231620167137). BC acknowledges support from University Grants Commission, India (Reference no. 241620129062). SRC acknowledges support from University Grants Commission, India (Reference no. 211610113404). SGN acknowledges support from the CSIR project $09/0575(15951)/2022$-EMR-I. KA acknowledges support from the CSIR project $09/0575(19300)/2024$-EMR-I. MB acknowledges the financial support through the National Quantum Mission (NQM) of the Department of Science and Technology, Government of India.

\bibliography{reference}

\end{document}